\newtheorem{theorem}{Theorem}
\newtheorem{proposition}[theorem]{Proposition}
\newenvironment{proof}[1][Proof]{\noindent\textbf{#1.} }{\ \rule{0.5em}{0.5em}}
\newcommand{\Tr}{\operatorname{Tr}}
\newcommand{\Nn}{\mathcal{N}^{(n)}}
\newcommand{\Mn}{\mathcal{M}^{(n)}}
\newcommand{\Sn}{\mathcal{S}^{(n)}}
\begin{document}

\title{Evaluating the Advantage of Adaptive Strategies for Quantum Channel Distinguishability} 
\author{Vishal Katariya}
\affiliation{Hearne Institute for Theoretical Physics, Department of Physics and Astronomy, and Center for Computation and Technology, Louisiana State University, Baton Rouge, Louisiana 70803, USA}
\author{Mark M. Wilde}
\affiliation{Hearne Institute for Theoretical Physics, Department of Physics and Astronomy, and Center for Computation and Technology, Louisiana State University, Baton Rouge, Louisiana 70803, USA}
\affiliation{Stanford Institute for Theoretical Physics, Stanford University, Stanford, California 94305, USA}
\date{\today}

\begin{abstract}
	Recently, the resource theory of asymmetric distinguishability for quantum strategies was introduced  by [Wang \textit{et al}., Phys.~Rev.~Research~1, 033169 (2019)]. The fundamental objects in the resource theory are pairs of quantum strategies, which are generalizations of quantum channels that provide a framework to describe an arbitrary quantum interaction. In the present paper, we provide semi-definite program characterizations of the one-shot operational quantities in this resource theory. We then apply these semi-definite programs to study the advantage conferred by adaptive strategies in discrimination and distinguishability distillation of generalized amplitude damping channels. We find that there are significant gaps between what can be accomplished with an adaptive strategy versus a non-adaptive strategy.
\end{abstract}

\maketitle

\section{Introduction}

In quantum information theory, the tasks of quantum state and channel discrimination have been studied in a considerable amount of detail; see Refs.~\cite{H69,H73,Hel76,HP91,ON00} and \cite{Kitaev1997,AKN98,CPR00,Acin01,RW05,GLN04}, respectively. Given the central importance of distinguishing quantum states or channels, it is reasonable to study distinguishability itself in the context of a resource theory \cite{Matsumoto2010,Wang2019b, Wang2019a}, i.e., to use resource-theoretic tools to quantify distinguishability, and to use these tools to study the tasks of distilling distinguishability from a pair of objects, diluting canonical units of distinguishability to a desired pair, and transforming one pair of entities to another pair.

References~\cite{Matsumoto2010,Wang2019b, Wang2019a} developed  some basic tools and a framework for the resource theory of asymmetric distinguishability. In some sense, the resource theory of asymmetric distinguishability can be thought of as a ``meta''-resource theory. The basic objects in this resource theory come in pairs, and their worth is decided by the distinguishability of the entities in a pair. This resource theory is also unique in the sense that all physical operations acting on each element of the pair are free. A variety of resource theories can be thought of as being derived from the resource theory of asymmetric distinguishability, by setting specific restrictions on the states or channels allowed for free \cite{Wang2019b}.

The most general discrimination task in quantum information theory is not that of discriminating channels, but that of distinguishing what are known as quantum strategies \cite{Gutoski2007, Gutoski2010, Gutoski2012}, also known as quantum combs, memory channels, or higher-order quantum maps \cite{Chiribella2008a,Chiribella2008, Chiribella2009}. A quantum strategy completely represents the actions of an agent in a multi-round interaction with another party, and forms the next rung in the hierarchical ladder that begins with quantum states and channels. A key insight of \cite{Chiribella2009} is that the hierarchy consisting of states, channels, superchannels, etc., ends with quantum strategies. That is, all so-called ``higher-order'' dynamics can be described as quantum strategies. Given this importance of quantum strategies, and the flexibility and power offered by the resource theory of asymmetric distinguishability, it is worthwhile to continue the study of it for quantum strategies, as initiated in Ref.~\cite{Wang2019a}.

In this paper, we provide several contributions to the resource theory of asymmetric distinguishability  \cite{Wang2019b, Wang2019a}. 
One of our main contributions is a semi-definite programming (SDP) characterization of two crucial quantities in this resource theory: the one-shot distillable distinguishability and the one-shot distinguishability cost of quantum strategies, which characterize the resource theory's distillation and dilution tasks, respectively. To do so, we build upon the previous SDP characterizations of the quantum strategy distance \cite{Chiribella2008,Gutoski2012}, which provides a distance measure between strategies.

The other main contribution of this paper is to apply these SDPs to study particular examples of channel distinguishability tasks. As indicated in Ref.~\cite{Wang2019a}, distinguishability distillation is closely linked to asymmetric quantum channel discrimination. In quantum channel discrimination, one can employ either parallel or adaptive strategies. By definition, adaptive strategies are no less powerful than parallel ones. It is known that in the asymptotic limit, adaptive strategies confer no advantage over parallel ones in asymmetric channel discrimination \cite{Hayashi2009,Berta2018b,Fang2019}. This leaves open the question of whether adaptive strategies can help in channel discrimination when a finite number of channel uses are allowed. Our SDP formulations help us compute and study this gap. As an example, we consider distinguishability tasks involving generalized amplitude damping channels (GADCs) and show that adaptive strategies offer a significant advantage over parallel ones with respect to various distinguishability metrics of interest, thus extending prior work on this topic from Ref.~\cite{Harrow2010}.


\section{Quantum Strategies}

The idea of quantum strategies, combs, or higher-order maps, goes back over a decade  \cite{Gutoski2007,Chiribella2009}. A quantum strategy generalizes a quantum channel, in that it allows for sequential interactions over multiple rounds. Consider that there are two parties Alice and Bob. Alice's $n$-turn quantum strategy describes her actions in an $n$-round interaction with Bob. In such a scenario, Bob's $n$-round interaction is described by a suitable quantum co-strategy. In other words, the interaction of Alice's $n$-turn quantum strategy with another suitable $n$-turn strategy (belonging to Bob) captures all possible interactive behavior that takes place over $n$ rounds between them. 
Reference~\cite{Chiribella2009} introduced the term ``quantum comb,'' which refers to the same physical object as a quantum strategy. 

An $n$-turn quantum strategy $\Nn$, with $n \geq 1$, input systems $A_1$ through $A_n$, and output systems $B_1$ through $B_n$, consists of the following: (a) memory systems $M_1$ through $M_{n-1}$, and (b) quantum channels $\mathcal{N}_{A_{1}\rightarrow M_{1}B_{1}}^{1}$, $\mathcal{N}_{M_{1}A_{2}\rightarrow M_{2}B_{2}}^{2}$, \ldots, $\mathcal{N}_{M_{n-2} A_{n-1}\rightarrow M_{n-1}B_{n-1}}^{n-1}$, and $\mathcal{N}_{M_{n-1} A_{n}\rightarrow B_{n}}^{n}$. The definition above allows for any of the input, output, or memory systems to be trivial, which means that state preparation and measurements can be captured in the framework of quantum strategies. For the sake of brevity, we use the notation $A^n$ to denote systems $A_1$ through $A_n$. Figure~\ref{fig:strategy-costrategy-figure} depicts a three-turn strategy interacting with a three-turn co-strategy.
\begin{figure}
	\centering
	\includegraphics{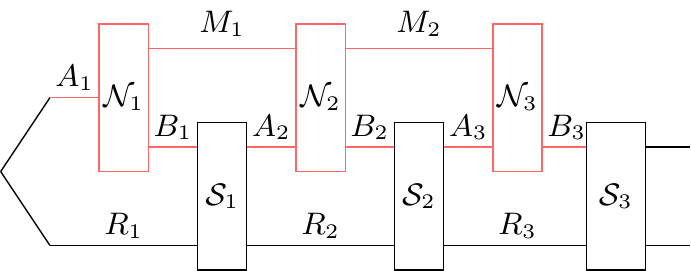}
	\caption{A three-turn strategy $\mathcal{N}^{(3)}$ interacts with a three-turn co-strategy $\mathcal{S}^{(3)}$. In red is the entirety of the three-turn strategy~$\mathcal{N}^{(3)}$, which consists of three quantum channels $\mathcal{N}_1$ through $\mathcal{N}_3$, connected to each other via memory systems. The co-strategy $\mathcal{S}^3$ consists of an initial state on systems $A_1 R_1$ as well, the quantum channels $\mathcal{S}_1$, $\mathcal{S}_2$, and $\mathcal{S}_3$, and memory systems $R_1$, $R_2$, and $R_3$.}
	\label{fig:strategy-costrategy-figure}
\end{figure}

A superchannel is a physical operation that converts one quantum channel to another \cite{Chiribella2008a, Gour2019}. It is a particular type of quantum strategy. Reference~\cite{Chiribella2008a} made the important observation that a superchannel can be equivalently represented as a bipartite channel, along with a causality constraint that defines the causal order of inputs and outputs. Reference~\cite{Chiribella2009}'s observation that quantum combs are all that are needed to describe higher-order quantum dynamics ties in neatly with, and generalizes, the superchannel-bipartite channel isomorphism. A superchannel can be cast as a bipartite channel, and likewise an object that transforms superchannels to superchannels (which is a quantum strategy) is itself a multipartite superchannel, which by the previously stated isomorphism is a multipartite channel~\cite{Chiribella2009}. Therefore, there is a ``collapse'' of the hierarchy that proves to be important, which implies that all higher-order quantum dynamics can be studied using the framework of quantum strategies \cite{Chiribella2009}.

Another isomorphism that is crucial in quantum information is the Choi isomorphism. It too establishes an equivalence between two different classes of objects--a single-party quantum channel can be equivalently represented by a bipartite quantum state. Putting the pieces together, we see that one can define a Choi state, or a Choi operator, not only for quantum channels, but also in general for quantum strategies. This isomorphism enables us to apply the tools developed in the resource theory of asymmetric distinguishability for states and channels to superchannels and, more generally, to quantum strategies. This was identified and studied in Ref. \cite{Wang2019a}, and here we elaborate in much more detail on these points.

In the remainder of this section, we establish some preliminaries regarding the quantum strategies formalism, and we also provide a semi-definite program for the quantum strategy distance between two $n$-round strategies that is slightly different from that presented in Ref.~\cite{Gutoski2012}.


\subsection{Choi operator and causality constraints}

To establish the Choi operator for a quantum strategy, we recall that a superchannel $\Theta_{(A_1 \rightarrow B_1) \rightarrow (A_2 \rightarrow B_2)}$ transforming $\mathcal{N}_{A_1 \rightarrow B_1}$ to $\mathcal{K}_{A_2 \rightarrow B_2}$ is in one-to-one correspondence with a bipartite channel $\mathcal{L}_{A_2 B_1 \rightarrow A_1 B_2}$ that has a certain no-signaling constraint \cite{Chiribella2008a, Gour2019}. The superchannel $\Theta_{(A_1 \rightarrow B_1) \rightarrow (A_2 \rightarrow B_2)}$ can be implemented via pre-processing and post-processing channels $\mathcal{E}_{A_2 \rightarrow A_1 M}$ and $\mathcal{D}_{B_1 M \rightarrow B_2}$ that share a memory system $M$. The Choi operator of the superchannel $\Theta_{(A_1 \rightarrow B_1) \rightarrow (A_2 \rightarrow B_2)}$, given by $\Gamma^{\Theta}_{A_1 A_2 B_1 B_2}$, is identified with the Choi operator of the corresponding bipartite channel
\begin{equation}
\mathcal{L}_{A_2 B_1 \rightarrow A_1 B_2} \coloneqq  \mathcal{D}_{B_1 M \rightarrow B_2} \circ \mathcal{E}_{A_2 \rightarrow A_1 M},
\end{equation}
along with a causality constraint that ensures no backward signaling in time; i.e., the $A$ systems can signal to the $B$ systems, but not vice versa. This is mathematically represented as \begin{equation}
\Gamma^{\Theta}_{A_1 A_2 B_1} = \Gamma^{\Theta}_{A_1 A_2} \otimes \pi_{B_1},
\end{equation}
 where $\pi_{B_1}$ is the maximally mixed state. This reasoning can be extended to quantum strategies.

A general $n$-turn quantum strategy $\Nn: A^n \rightarrow B^n$ is uniquely associated to its Choi operator $\Gamma^{\Nn}_{A^n B^n}$ via \cite{Gutoski2007}
\begin{equation}
\Gamma^{\Nn}_{A^n B^n} \coloneqq  \Nn_{A'^n \rightarrow B^n} (  \Gamma_{A'_1 A_1} \otimes \Gamma_{A'_2 A_2} \otimes \dots \otimes \Gamma_{A'_{n} A_n} )
\end{equation}
where $\Gamma_{A' A}\equiv |\Gamma\rangle\!\langle \Gamma|_{A'A}$ and $|\Gamma\rangle_{A'A} = \sum_{i} |i\rangle_{A'} |i\rangle_{A}$ is the unnormalized maximally entangled vector on systems $A' A$. The constraints on the Choi operator $\Gamma^{\Nn}_{A^n B^n}$ are that
\begin{equation}
\Gamma^{\mathcal{N}^{(n)}}_{ A^n  B^n} \geq 0,
\end{equation}
and that there exist $n$ positive semi-definite operators $N_{[1]}$, $N_{[2]},$ \ldots, $N_{[n]}$, with $N_{[i]}$ acting on systems $A^i B^i  $ for $1 \leq i \leq n$, such that
\begin{equation}
	\begin{aligned}
	N_{[n]} & = \Gamma^{\Nn}_{A^n B^n}, \\
		\Tr_{B_n} [ N_{[n]} ] &= N_{[n-1]} \otimes I_{A_n}, \\
		\Tr_{B_{n-1}} [ N_{[n-1]} ] &= N_{[n-2]} \otimes I_{A_{n-1}}, \\
		\vdots \\
		\Tr_{B_{2}} [ N_{[2]} ] &= N_{[1]} \otimes I_{A_2},  \\
		\Tr_{B_1} [ N_{[1]} ] &= I_{A_1}.
	\end{aligned} \label{eq:strategy-constraints}
\end{equation}

These latter constraints \eqref{eq:strategy-constraints} are causality constraints that arise due to the flow of information in the strategy. They generalize the single causality constraint imposed on the Choi operator of a superchannel. Conversely, if an operator $\Gamma^{\Nn}_{A^n B^n}$ satisfies the above constraints, then there is a quantum strategy associated to it \cite{Gutoski2007}. 

\subsection{Link Product}

How do we ``connect'' or compose two quantum strategies? To this end, the notion of link product was introduced to denote the composition, or interaction, of two quantum strategies \cite{Chiribella2009}. Two quantum strategies are composed by connecting the appropriate input and output systems, with an example being given in Fig.~\ref{fig:strategy-costrategy-figure}. Suppose that $n$-turn strategy $\Nn$ takes systems $A^n$ to $B^n$ and $m$-turn strategy $\mathcal{S}^{(m)}$ takes systems $C^m$ to  $D^m$. The Choi operator of the composition $\Nn \circ \mathcal{S}^{(m)}$ is given by $\Gamma^{\Nn} * \Gamma^{\mathcal{S}^{(m)}}$, defined in \eqref{eq:link-product}. Here, the nomenclature ``comb'' shines, as we connect the two strategies as if they were interlocking pieces, making sure to connect the appropriate input and output ports of the first and second strategy, respectively.

Qualitatively, the link product connects and ``collapses'' matching input and output systems of the two strategies. The composition $\Nn \circ \mathcal{S}^{(m)}$ is another strategy that takes systems 
\begin{equation}
\left(A^n \setminus D^m\right) \left( C^m \setminus B^n \right) \rightarrow \left( D^m \setminus A^n \right) \left( B^n \setminus C^m \right).
\end{equation}
The matching systems in this case are $A^n \cap D^m$ and $B^n \cap C^m$. To maintain brevity, we define
\begin{equation}
N \cap S \coloneqq  \left( A^n \cap D^m \right) \cup \left( B^n \cap C^m\right),
\end{equation}
as well as
\begin{equation}
N \setminus S \coloneqq  \left( A^n B^n\right) \setminus \left(C^m D^m \right)
\end{equation}
and
\begin{equation} 
S \setminus N =  \left( C^m D^m \right) \setminus \left( A^n B^n \right).
\end{equation}
The Choi operator of the composition $\Nn \circ \Sn$ is given by the link product of strategy Choi operators $\Gamma^{\Nn}_{A^n B^n}$ and $\Gamma^{\mathcal{S}^{(m)}}_{C^m D^m}$, and is defined as follows: 
\begingroup\allowdisplaybreaks[0]
\begin{multline} 
	\Gamma^{\Nn}_{A^n B^n} * \Gamma^{\mathcal{S}^{(m)}}_{C^m D^m} \coloneqq  \\
	\Tr_{N \cap S} \left[ \left(I_{N \setminus S} \otimes (\Gamma^{\mathcal{S}})^{T_{N \cap S}} \right) \left( \Gamma^{\mathcal{N}} \otimes I_{S \setminus N} \right) \right]
	\label{eq:link-product}
\end{multline}
\endgroup
where the notation $T_{N \cap S}$ refers to taking the partial transpose on systems $N \cap S$.

\subsection{Telling two strategies apart}

\label{subsec:strategy-distance}

It is natural to introduce a notion of distance, or distinguishability, between two strategies. In this vein, the quantities quantum strategy distance \cite{Chiribella2008, Chiribella2009, Gutoski2012}, the strategy fidelity \cite{Gutoski2018}, and the strategy max-relative entropy \cite{Chiribella2016} were previously defined. These are generalized by the generalized strategy divergence of Ref.~\cite{Wang2019a}.

Given two $n$-turn strategies with the same input and output systems, the most general discrimination strategy is defined analogously to that in channel discrimination; instead of passing a common state to two channels, one interacts a common $n$-turn co-strategy with the unknown strategy to obtain an output state on which a measurement is performed. That is, for strategies $\Nn$ and $\Mn: A^n \rightarrow B^n$, consider an arbitrary $n$-turn co-strategy $\Sn: B^{n-1} \rightarrow A^{n} R_n$. The compositions $\Nn \circ \Sn$ and $\Mn \circ \Sn$ yield states on $R_n B_n$. The strategy distance between $\Nn$ and $\Mn$ is the maximum trace distance between the states on $R_n B_n$ corresponding to strategies $\Nn$ and $\Mn$:
\begin{equation} \label{eq:strategy-distance}
\left\Vert \Nn - \Mn \right\Vert_{\diamond n} \coloneqq  \sup_{\Sn} \left\Vert \Nn \circ \Sn - \Mn \circ \Sn \right\Vert_{1}.
\end{equation}

The quantum strategy distance denotes the maximum classical trace distance between the output probability distributions produced by processing both strategies with a common co-strategy. For two arbitrary $n$-turn strategies, the strategy distance can be computed via a semi-definite program (SDP)~\cite{Gutoski2012}, which provides a powerful tool that can be used to study, among other things, the advantage provided by adaptive strategies over parallel ones in quantum channel discrimination, explored in Section~\ref{sec:adaptive-channel-discrimination}. 

In what follows, we present an SDP for the \textit{normalized} quantum strategy distance $\frac{1}{2}\left\Vert \Nn - \Mn \right\Vert_{\diamond n}$ of two strategies  that is slightly different from that presented previously, in Ref.~\cite{Gutoski2012}. This alternate form of the strategy distance is used later to derive SDP characterizations of the distillable distinguishability and the distinguishability cost in Section~\ref{sec:sdp-quantities}. 

\begin{proposition}
The normalized strategy distance $\frac{1}{2}\left\Vert \Nn - \Mn \right\Vert_{\diamond n}$ can be expressed as the following SDP, where $\Gamma^{\mathcal{N}^{(n)}}_{A^n B^n}$ and $\Gamma^{\mathcal{M}^{(n)}}_{A^n B^n}$ are the Choi operators of the strategies $\Nn_{A^n \rightarrow B^n}$ and $\Mn_{A^n \rightarrow B^n}$:
\begin{equation}
\sup_{S, S_{[n]}, \cdots, S_{[1]} \geq0}\left\{
\begin{array}
[c]{c}%
\Tr[ S (\Gamma^{\Nn}  - \Gamma^{\Mn})   ]:\\
\begin{aligned}
S & \leq S_{[n]} \otimes I_{B_n}, \\
\Tr_{A_n} [ S_{[n]} ] & = S_{[n-1]} \otimes I_{B_{n-1}}, \\
\vdots\\
\Tr_{A_2} [ S_{[2]} ] & = S_{[1]} \otimes I_{B_{1}}, \\
\Tr [ S_{[1]} ] & = 1
\end{aligned}
\end{array}
\right\}  .
\label{eq:primal-strategy-dist}
\end{equation}
The dual of the normalized strategy distance is
\begin{equation}
\inf_{\substack{\mu \in \mathbb{R},Y_n \geq 0, \\ Y_1, \dots, Y_{n-1} \in \text{Herm}}}\left\{
\begin{array}
[c]{c}%
\mu : \\
\begin{aligned}
Y_n & \geq \Gamma^{\Nn} - \Gamma^{\Mn},\\
Y_{n-1} \otimes I_{A_n} & \geq \Tr_{B_n}[ Y_n ] ,\\
Y_{n-2} \otimes I_{A_{n-1}} & \geq \Tr_{B_{n-1}}[ Y_{n-1}] ,\\
\vdots\\
Y_{1} \otimes I_{A_{2}} & \geq \Tr_{B_{2}}[ Y_{2}] ,\\
\mu I_{A_1} & \geq \Tr_{B_1}[ Y_{1}]
\end{aligned}
\end{array}
\right\}  . \label{eq:dual-strategy-distance}
\end{equation} 
\end{proposition}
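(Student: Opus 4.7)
The plan is to start from the definition of the strategy distance in \eqref{eq:strategy-distance} and apply the variational formula $\frac{1}{2}\|\rho - \sigma\|_1 = \sup_{0 \leq \Pi \leq I} \Tr[\Pi(\rho - \sigma)]$, so that the optimization becomes a joint supremum over a co-strategy $\Sn$ and a measurement operator $\Pi$ on the output systems $R_n B_n$. I would then expand the output states using the link product formula \eqref{eq:link-product} and absorb the co-strategy Choi operator $\Gamma^{\Sn}$ together with $\Pi$ (and the partial transpose introduced by the link product) into a single Hermitian operator $S$ on $A^n B^n$. After this rewriting, the objective takes the form $\Tr[S(\Gamma^{\Nn} - \Gamma^{\Mn})]$ appearing in \eqref{eq:primal-strategy-dist}.

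The central step is to show that as $(\Sn, \Pi)$ range over all valid co-strategies and POVM elements, the induced operator $S$ ranges over exactly the feasible set of \eqref{eq:primal-strategy-dist}. A co-strategy $\Sn : B^{n-1} \to A^n R_n$ has a Choi operator satisfying causality constraints analogous to \eqref{eq:strategy-constraints} but with the roles of $A$ and $B$ systems interchanged and with a trace-one operator at the top rather than a partial identity, since a co-strategy begins by preparing a state. Defining $S_{[k]}$ as the appropriate partial trace of $\Gamma^{\Sn}$ and using $\Pi \leq I_{R_n B_n}$ yields the hierarchy of equalities $\Tr_{A_{k+1}}[S_{[k+1]}] = S_{[k]} \otimes I_{B_k}$, the normalization $\Tr[S_{[1]}] = 1$, and the inequality $S \leq S_{[n]} \otimes I_{B_n}$ listed in the primal. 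Conversely, any feasible $(S, S_{[n]}, \dots, S_{[1]})$ can be used to construct a valid co-strategy (invoking the converse statement just after \eqref{eq:strategy-constraints} applied to the co-strategy setting) together with an associated POVM element derived from the gap in $S \leq S_{[n]} \otimes I_{B_n}$.

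For the dual \eqref{eq:dual-strategy-distance}, I would apply standard SDP Lagrangian duality. Introducing a multiplier $Y_n \geq 0$ for the inequality constraint, Hermitian multipliers $Y_{n-1}, \dots, Y_1$ for the equality constraints, and a real scalar $\mu$ for the normalization $\Tr[S_{[1]}] = 1$, I would collect the coefficient of each primal variable in the Lagrangian. Requiring that each such coefficient be $\leq 0$ as an operator (so that the supremum over the nonnegative primal variables $S, S_{[k]}$ remains finite) produces precisely the dual constraints $Y_n \geq \Gamma^{\Nn} - \Gamma^{\Mn}$, $Y_{k-1} \otimes I_{A_k} \geq \Tr_{B_k}[Y_k]$ for $k = 2, \dots, n$, and $\mu I_{A_1} \geq \Tr_{B_1}[Y_1]$, with dual objective $\mu$. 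Strong duality follows from Slater's condition: the primal is strictly feasible (any co-strategy composed with a suitably scaled full-rank POVM element gives such a point), and the primal optimum is bounded above by $1$.

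The main obstacle I expect lies in the absorption step, in particular carefully tracking the partial transposes introduced by the link product \eqref{eq:link-product} and verifying in both directions that the set of operators $S$ produced by valid $(\Sn, \Pi)$ pairs coincides with the feasible set of \eqref{eq:primal-strategy-dist}. The reverse direction, reconstructing an explicit co-strategy and measurement from a feasible primal point, is the most delicate piece; once this correspondence is in place, the derivation of the dual \eqref{eq:dual-strategy-distance} reduces to a routine application of SDP duality.
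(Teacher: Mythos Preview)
Your proposal is correct, but it takes a different route from the paper. The paper does not start from the definition \eqref{eq:strategy-distance} and build the primal from scratch; instead it quotes the existing SDP of Gutoski \cite{Gutoski2012}, stated in terms of a measuring co-strategy $\{T_0,T_1\}$ with $T_0+T_1=T_{[n]}\otimes I_{B_n}$ and objective $\Tr[(\Gamma^{\Nn}-\Gamma^{\Mn})(T_0-T_1)]$. The entire argument for the primal then reduces to the observation that, because $\Tr[\Gamma^{\Nn}T]=\Tr[\Gamma^{\Mn}T]=1$ for the full co-strategy $T=T_0+T_1$, the objective collapses to $2\Tr[(\Gamma^{\Nn}-\Gamma^{\Mn})T_0]$, after which one simply renames $T_0\to S$ and reads off the constraints on a sub-co-strategy. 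Your approach instead proposes to \emph{re-derive} the characterization of sub-co-strategies by absorbing $(\Sn,\Pi)$ into $S$ via the link product---which is essentially what \cite{Gutoski2012} already does. This is more self-contained but considerably longer, and the obstacle you flag (tracking the partial transpose and proving the reverse direction) is exactly the nontrivial content of the Gutoski result that the paper simply cites. For the dual, both you and the paper proceed by the Lagrange-multiplier method, and your identification of the multipliers is correct.
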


\begin{proof}
	We start by recalling the SDP formulation of the  strategy distance from Ref.~\cite{Gutoski2012}:
	\begin{multline}
		\left\Vert \Nn - \Mn \right\Vert_{\diamond n} =
		\\ \sup_{T_0, T_1 \geq0} \left\{ \begin{array}
			[c]{c}%
			\Tr[  (\Gamma^{\Nn}  - \Gamma^{\Mn})  (T_0 - T_1) ]:\\
			\begin{aligned}
				T_0 + T_1 & = T_{[n]} \otimes I_{B_n}, \\
				\Tr_{A_n} [ T_{[n]} ] & = T_{[n-1]} \otimes I_{B_{n-1}}, \\
				\vdots\\
				\Tr_{A_2} [ T_{[2]} ] & = T_{[1]} \otimes I_{B_{1}}, \\
				\Tr [ T_{[1]} ] & = 1
			\end{aligned}
		\end{array}
		\right\}.
		\label{eq:GW-orig-SDP}
	\end{multline}
	In the above, $\{ T_0, T_1\}$ comprise the Choi operators of an $n$-round measuring co-strategy, as defined in Ref.~\cite{Gutoski2012}. This means that $T = T_0 + T_1$ is the Choi operator of an $n$-round non-measuring co-strategy and obeys the following constraints:
	\begin{align}
		T &\geq 0 \\
		T &= T_{[n]} \otimes I_{B_n}, \\
		\Tr_{A_n} [ T_{[n]} ] & = T_{[n-1]} \otimes I_{B_{n-1}}, \\
		\vdots\\
		\Tr_{A_2} [ T_{[2]} ] & = T_{[1]} \otimes I_{B_{1}}, \\
		\Tr [ T_{[1]} ] & = 1.
	\end{align} 
	The objective function
	\begin{equation}
		\Tr[  (\Gamma^{\Nn}  - \Gamma^{\Mn})  (T_0 - T_1) ]
	\end{equation}
	can be rewritten as follows:
	\begin{align}
		& ~\Tr \!\left[ \left( \Gamma^{\mathcal{N}^{(n)}} - \Gamma^{\mathcal{M}^{(n)}} \right) (T_0 - T_1) \right]\notag \\
		= &~ \Tr \!\left[ \left( \Gamma^{\mathcal{N}^{(n)}} - \Gamma^{\mathcal{M}^{(n)}} \right) \left(T_0 - (T- T_0)\right) \right] \notag\\
		= &~ 2 \Tr \!\left[ \left( \Gamma^{\mathcal{N}^{(n)}} - \Gamma^{\mathcal{M}^{(n)}} \right) T_0 \right]- \Tr \!\left[ \left( \Gamma^{\mathcal{N}^{(n)}} - \Gamma^{\mathcal{M}^{(n)}} \right) T \right] \notag\\
		= &~ 2 \Tr \!\left[ \left( \Gamma^{\mathcal{N}^{(n)}} - \Gamma^{\mathcal{M}^{(n)}} \right) T_0 \right]. \label{eq:app-rewriting-gutoski}
	\end{align}
	In the above, the first equality arises because $T = T_0 + T_1$. The third equality is due to the fact that for the strategy $\Nn$, the probabilities of obtaining outcomes corresponding to $T_0$ and $T_1$ add to 1, i.e.,
	\begin{equation}
	\Tr[ \Gamma^{\mathcal{N}^{(n)}} (T_0 + T_1) ] = \Tr[ \Gamma^{\mathcal{N}^{(n)}} T ] = 1.
	\end{equation}
	The same holds for $\Mn$ as well, and thus $\Tr \!\left[ \left( \Gamma^{\mathcal{N}^{(n)}} - \Gamma^{\mathcal{M}^{(n)}} \right) T \right] = 0$.
	
	The operator $T_0$ is the Choi operator of an $n$-round measuring co-strategy, and so it obeys the following constraints:
	\begin{align}
		T_0 &\geq 0, \\
		T_0 & \leq T_{[n]} \otimes I_{B_n}, \\
		\Tr_{A_n} [ T_{[n]} ] & = T_{[n-1]} \otimes I_{B_{n-1}}, \\
		\vdots\\
		\Tr_{A_2} [ T_{[2]} ] & = T_{[1]} \otimes I_{B_{1}}, \\
		\Tr [ T_{[1]} ] & = 1.
	\end{align} 	
	By combining \eqref{eq:app-rewriting-gutoski} with the constraints on $T_0$, we arrive at the desired semi-definite program in \eqref{eq:primal-strategy-dist}. Note also that we can start with \eqref{eq:primal-strategy-dist} and run the whole proof backwards to arrive at \eqref{eq:GW-orig-SDP}, setting $T_0=S$, $T_1 = S_{[n]} \otimes I_{B_n} - S$, and $T = S_{[n]} \otimes I_{B_n}$.

	The dual is given by \eqref{eq:dual-strategy-distance}, which can be verified by the Lagrange multiplier method.
\end{proof}

\section{Distinguishability Resource Theory}

We first recall some aspects of the resource theory of asymmetric distinguishability, work on which was begun in~Ref.~\cite{Matsumoto2010} and continued in Refs.~\cite{Wang2019b, Wang2019a, Rethinasamy2019}.
The objects in consideration in this resource theory are pairs of like objects. These objects can be probability distributions, quantum states, quantum channels, or most generally, quantum strategies of an equal number of rounds. Any operation on the pair elements is considered free, justified by the fact that data processing cannot increase the distinguishability of two objects.

The object $(\rho, \sigma)$, a state box, is an ordered pair of states that is to be understood as an atomic entity: upon being handed a state box, one does not know which state it contains. In this paper, we consider ordered pairs of $n$-turn quantum strategies, which generally are represented by $( \Nn, \Mn )$.

\subsection{Bits of asymmetric distinguishability}

Here, we recall the canonical unit of asymmetric distinguishability (AD) \cite{Wang2019b}. The state box $\left( |0\rangle\!\langle 0|, \pi \right)$ encapsulates one bit of AD, where
\begin{equation}
\pi \coloneqq  \frac{1}{2} \left( |0\rangle\!\langle 0| + |1\rangle\!\langle 1| \right)
\end{equation}
 is the maximally mixed qubit state. Defining this unit enables us to quantify the amount of resource present in an arbitrary strategy box. As discussed in Ref.~\cite{Wang2019b}, the bit of AD represents a pair of experiments in which the null hypothesis corresponds to preparing  $|0\rangle\!\langle 0|$, and the alternative hypothesis corresponds to preparing $\pi$.  A number $m$ bits of asymmetric distinguishability corresponds to the box $\left( |0\rangle\!\langle 0|^{\otimes m}, \pi^{\otimes m} \right)$. Alternatively, the state box $\left( |0\rangle\!\langle 0|, \pi_M \right)$, with
\begin{equation} 
 \pi_M \coloneqq  \frac{1}{M} |0\rangle\!\langle 0| + \left( 1 - \frac{1}{M} \right) |1\rangle\!\langle 1| ,
 \end{equation}
 contains $\log_2 M$ bits of~AD. 

\subsection{Distillation and dilution of strategy boxes}

Given a strategy box $( \Nn, \Mn )$, we are interested in two questions: (a) how many bits of AD can be distilled from it, and (b) how many bits of AD are required so that one can dilute them to $( \Nn, \Mn )$? These quantities are crucial to the resource theory of asymmetric distinguishability. The one-shot versions of these tasks are explained below, and we also provide  explicit semi-definite programs for them.

We start by discussing distinguishability distillation. The goal of approximate distillation is to transform a strategy box into as many approximate bits of AD as possible. Quantitatively, the one-shot $\varepsilon$-approximate distillable distinguishability of strategy box $( \Nn, \Mn )$ is given by
\begingroup\allowdisplaybreaks[0]
\begin{multline} 
D_{d}^{\varepsilon}(\mathcal{N}^{(n)},\mathcal{M}^{(n)})\coloneqq  \\
\log_{2}\sup_{\mathcal{S}^{(n)}}\{M:\Nn \circ \Sn \approx_{\varepsilon} |0\rangle\!\langle 0|, \Mn \circ \Sn =\pi_{M}\} \notag ,
\end{multline}
\endgroup
where $\Sn$ is an $n$-turn co-strategy that interacts with $\Nn$ and $\Mn$ to yield a qubit state, and 
\begin{equation}
	\Nn \circ \Sn \approx_{\varepsilon} |0\rangle\!\langle 0| \Longleftrightarrow \frac{1}{2} \left\Vert \Nn \circ \Sn - |0\rangle\!\langle 0| \right\Vert_{1} \leq \varepsilon.
\end{equation}


The operational quantity for approximate distillation is the smooth strategy min-relative entropy. The smooth strategy min-relative entropy between $n$-turn strategies $\Nn$ and $\Mn$ is defined as follows: 
\begin{equation}
	D_{\min}^{\varepsilon} ( \Nn \Vert \Mn ) \coloneqq  \sup_{\Sn} D_{\min}^{\varepsilon} ( \Nn \circ \Sn \Vert \Mn \circ \Sn ) \notag
\end{equation}
where $\Nn$ and $\Mn$ take systems $A^n$ to $B^n$, and $\Sn$ is an $n$-turn co-strategy that takes systems $B^{n-1}$ to $A^n R_n$. The smooth min-relative entropy of states is defined as \cite{Buscemi2010, Brandao2011, Wang2012}
\begin{equation}
	D_{\min}^{\varepsilon}(\rho\Vert\sigma)\coloneqq -\log_{2}\inf_{0\leq\Lambda\leq I}\left\{  \operatorname{Tr}[\Lambda\sigma]:\operatorname{Tr}[\Lambda\rho]\geq1-\varepsilon\right\} \notag .
\end{equation}

Distinguishability dilution, on the other hand, is the complementary task to distillation. Approximate dilution refers to the task of transforming $\left( |0\rangle\!\langle 0|, \pi_M \right)$ to approximately one copy of $( \Nn, \Mn )$ with as small $M$ as possible. The one-shot $\varepsilon$-approximate distinguishability cost of the box $( \Nn, \Mn )$ is given by the following:
\begingroup\allowdisplaybreaks[0]
\begin{multline}
D_{c}^{\varepsilon}(\mathcal{N}^{(n)},\mathcal{M}^{(n)})\coloneqq  \\
\log_{2}\inf_{\Sn}\{M:\Sn(|0\rangle\!\langle 0|)\approx_{\varepsilon}\mathcal{N}^{(n)}, \Sn (\pi_{M})=\mathcal{M}^{(n)}\}
\end{multline}
\endgroup
where
\begin{multline}
 \Sn(|0\rangle\!\langle 0|) \approx_{\varepsilon} \Nn \quad \Longleftrightarrow \\
 \quad \tfrac{1}{2} \left\Vert \Sn(|0\rangle\!\langle 0|) - \Sn \right\Vert_{\diamond n} \leq \varepsilon.
\end{multline}

For the dilution task, the operational quantity is the smooth strategy max-relative entropy of quantum channels and is defined as 
\begin{equation}
D_{\max}^{\varepsilon} ( \Nn \Vert \Mn ) \coloneqq   \inf_{\widetilde{\mathcal{N}}^{(n)} \approx_{\varepsilon} \Nn} D_{\max}( \widetilde{\mathcal{N}}^{(n)}  \Vert \Mn  ) ,
\end{equation}
where $D_{\max}( \widetilde{\mathcal{N}}^{(n)}  \Vert \Mn  )$ is equal to the max-relative entropy for strategies, defined as \cite{Chiribella2016}
\begin{equation}
D_{\max}( \widetilde{\mathcal{N}}^{(n)}  \Vert \Mn  ) \coloneqq  D_{\max}( \Gamma^{\widetilde{\mathcal{N}}^{(n)}}  \Vert \Gamma^{\Mn}  ) 
\label{eq:max-rel-ent-exact-strat},
\end{equation}
and the max-relative entropy for states is defined as $D_{\max}(\rho \Vert \sigma) \coloneqq  \inf\{\lambda : \rho \leq 2^\lambda \sigma\}$ \cite{Datta2009}.

We now state a result claimed in Ref.~\cite{Wang2019a}. Its detailed proof is given in Appendix~\ref{sec:one-shot-theorem-proof}.
\begin{theorem} \label{thm:approx-one-shot}
	As seen in Ref.~\cite{Wang2019a}, the approximate one-shot distillable distinguishability of the strategy box $\left(\Nn,\Mn\right)$ is equal to the smooth strategy min-relative entropy:
\begin{equation}
		D_{d}^{\varepsilon}(\Nn, \Mn)=D_{\min}^{\varepsilon}(\Nn \Vert \Mn ), 
\end{equation}
	and the approximate one-shot distinguishability cost is equal to the smooth strategy max-relative entropy:
\begin{equation}
		D_{c}^{\varepsilon}(\Nn, \Mn )=D_{\max}^{\varepsilon}(\Nn \Vert \Mn). 
\end{equation}
\end{theorem}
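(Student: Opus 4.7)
The plan is to lift the already-established state-level identities $D_d^{\varepsilon}(\rho,\sigma)=D_{\min}^{\varepsilon}(\rho\Vert\sigma)$ and $D_c^{\varepsilon}(\rho,\sigma)=D_{\max}^{\varepsilon}(\rho\Vert\sigma)$ from the resource theory of asymmetric distinguishability for states~\cite{Wang2019b} up to strategies. The distillation equality will reduce cleanly to its state counterpart, because distilling from a strategy pair amounts to distilling from the state pair obtained by first interacting with an arbitrary $n$-turn co-strategy. The dilution equality requires a bit more work: I will construct an explicit state-to-strategy ``switch'' free operation directly at the level of Choi operators.

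For the distillation identity, I would prove both inequalities by invoking the state-level result on the output pair $(\Nn\circ\Sn,\Mn\circ\Sn)$. For $D_d^{\varepsilon}\leq D_{\min}^{\varepsilon}$, I take any co-strategy $\Sn$ witnessing $\log_2 M$ bits of AD in the strategy problem; the outputs $\Nn\circ\Sn$ and $\Mn\circ\Sn$ are already qubit states satisfying the required target conditions, so $\log_2 M\leq D_d^{\varepsilon}(\Nn\circ\Sn,\Mn\circ\Sn)=D_{\min}^{\varepsilon}(\Nn\circ\Sn\Vert\Mn\circ\Sn)\leq D_{\min}^{\varepsilon}(\Nn\Vert\Mn)$ follows from the state result and the definition. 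For the reverse inequality, I fix $\Sn$, use the state identity to extract a distillation channel $\Lambda$ achieving $D_{\min}^{\varepsilon}(\Nn\circ\Sn\Vert\Mn\circ\Sn)$ bits from the state pair, and observe that $\Lambda\circ\Sn$ is itself a valid $n$-turn co-strategy distilling the same number of bits from $(\Nn,\Mn)$. Taking suprema over $\Sn$ on both sides closes the loop.

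For the dilution identity, I would work directly with Choi operators. In achievability, I pick $\widetilde{\Nn}\approx_{\varepsilon}\Nn$ with $\Gamma^{\widetilde{\Nn}}\leq M\,\Gamma^{\Mn}$, and define the free operation $\Sn$ that completely dephases its qubit input in the computational basis, then outputs the strategy $\widetilde{\Nn}$ on the $|0\rangle\!\langle 0|$ branch and the ``residual'' strategy $\mathcal{R}^{(n)}$ with Choi operator $\Gamma^{\mathcal{R}^{(n)}}:=\tfrac{M}{M-1}\bigl(\Gamma^{\Mn}-\tfrac{1}{M}\Gamma^{\widetilde{\Nn}}\bigr)$ on the $|1\rangle\!\langle 1|$ branch. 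A one-line linearity check then yields $\Sn(\pi_M)=\Mn$ exactly and $\Sn(|0\rangle\!\langle 0|)=\widetilde{\Nn}\approx_{\varepsilon}\Nn$. Conversely, given any valid dilution operation $\Sn$, I set $\widetilde{\Nn}:=\Sn(|0\rangle\!\langle 0|)$, and positivity of $\Sn(|1\rangle\!\langle 1|)$ at the Choi level together with the linear expansion $\Gamma^{\Mn}=\tfrac{1}{M}\Gamma^{\widetilde{\Nn}}+\tfrac{M-1}{M}\Gamma^{\Sn(|1\rangle\!\langle 1|)}$ forces $\Gamma^{\Mn}\geq\tfrac{1}{M}\Gamma^{\widetilde{\Nn}}$, so that $D_{\max}(\widetilde{\Nn}\Vert\Mn)\leq\log_2 M$.

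The main technical obstacle I expect is verifying that the residual operator $\Gamma^{\mathcal{R}^{(n)}}$ used in the achievability direction is genuinely the Choi operator of a valid $n$-turn strategy, i.e., satisfies the full hierarchy of causality constraints in~\eqref{eq:strategy-constraints}. Positivity is immediate from the $D_{\max}$ bound, and by linearity each hierarchical partial trace $\Tr_{B_k}$ inherits the required tensor-product structure $(\cdot)\otimes I_{A_k}$ from the causality constraints satisfied by $\Gamma^{\Mn}$ and $\Gamma^{\widetilde{\Nn}}$; the scaling factor $\tfrac{M}{M-1}$ is chosen precisely so that the bottom-level trace condition $\Tr_{B_1}[\mathcal{R}^{(n)}_{[1]}]=I_{A_1}$ is met. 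The verification is routine but deserves to be carried out carefully in the appendix, along with the smoothing argument that ties together the inner approximations $\approx_{\varepsilon}$ in the diamond norm and the max-relative entropy optimization.
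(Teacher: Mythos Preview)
Your proposal is correct and follows essentially the same route as the paper's proof in Appendix~\ref{sec:one-shot-theorem-proof}: the distillation identity is obtained by combining an arbitrary co-strategy with a measurement post-processing (you package this as the state-level identity $D_d^{\varepsilon}=D_{\min}^{\varepsilon}$, the paper unwraps it), and the dilution identity is proved via the same ``switch'' construction with residual strategy $\mathcal{R}^{(n)}=\tfrac{M\,\Mn-\widetilde{\Nn}}{M-1}$ for achievability, and a positivity/data-processing argument for the converse. The causality-constraint check you flag as the main obstacle is exactly the point the paper handles (briefly) when asserting that $\mathcal{N}'^{(n)}$ is a valid strategy, and your linearity argument for it is correct.
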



\subsection{SDPs for one-shot quantities}

\label{sec:sdp-quantities}

This section contains one of our main contributions: explicit semi-definite programs to calculate, for a given strategy box, the approximate distillable distinguishability and approximate distinguishability cost. 

\begin{proposition}
Considering strategies $\Nn$ and $\Mn$ to take systems $A^n$ to $B^n$, the distillable distinguishability is computable via the following semi-definite program:
\begin{multline}
\label{eq:strategy-min-entropy-primal}
2^{- D_{\min}^{\varepsilon} ( \Nn \Vert \Mn )} =
\\ \inf_{S, S_{[n]}, \dots, S_{[1]} \geq 0}\left\{
\begin{array}
[c]{c}%
\operatorname{Tr}[S\, \Gamma^{\mathcal{M}^{(n)}}  ]:\\
\begin{aligned}
	\operatorname{Tr}[ S\, \Gamma^{\mathcal{N}^{(n)}} ] &\geq 1-\varepsilon,\\
	S &\leq S_{[n]} \otimes I_{B_n}, \\
	\Tr_{A_n} [ S_{[n]} ] &= S_{[n-1]} \otimes I_{B_{n-1}}, \\
						  & \vdots \\
	\Tr_{A_2} [ S_{[2]} ] &= S_{[1]} \otimes I_{B_1}, \\
	\Tr [ S_{[1]} ] &= 1
\end{aligned}
\end{array}
\right\}  ,
\end{multline}
with dual
\begin{multline}
\label{eq:strategy-min-entropy-dual}
\sup_{\substack{\mu_1, Y_n \geq 0, \\
\mu_2 \in \mathbb{R}, \\ Y_1, \dots, \\
Y_{n-1} \in \text{Herm} }} \left\{
	\begin{array}
		[c]{c}
		(1- \varepsilon) \mu_1 - \mu_2 :\\
		\begin{aligned}
			Y_n &\geq \mu_1 \Gamma^{\mathcal{N}^{(n)}}  - \Gamma^{\mathcal{M}^{(n)}}, \\
			Y_{n-1} \otimes I_{A_n} &\geq \Tr_{B_n} [ Y_n ], \\
			Y_{n-2} \otimes I_{A_{n-1}} &\geq \Tr_{B_{n-1}} [ Y_{n-1} ], \\
									&\vdots \\
Y_{1} \otimes I_{A_{2}} &\geq \Tr_{B_{2}} [ Y_{2} ], \\
			\mu_2 I_{A_1} &\geq \Tr_{B_1} [ Y_{1} ] 
	\end{aligned}
	\end{array}
\right\}.
\end{multline}
\end{proposition}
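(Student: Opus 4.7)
My plan is to reduce the problem to the SDP characterization of the strategy distance that was just established, by viewing the smooth min-relative entropy as an optimization over a POVM composed with a co-strategy, and recognizing the resulting object as (one branch of) a measuring co-strategy.

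First I would unfold the definition: $2^{-D_{\min}^{\varepsilon}(\Nn\|\Mn)} = \inf_{\Sn}\inf_{0\le\Lambda\le I}\{\Tr[\Lambda(\Mn\circ\Sn)] : \Tr[\Lambda(\Nn\circ\Sn)]\ge 1-\varepsilon\}$, where $\Sn$ ranges over $n$-turn co-strategies taking $B^{n-1}\to A^n R_n$ and $\Lambda$ acts on $R_nB_n$. Then, exactly as in the rewriting of Gutoski–Watrous' SDP done in the proof of the preceding proposition, the pair $\{\Lambda,I-\Lambda\}$ applied after $\Sn$ is a measuring co-strategy, and the ``accept'' branch has a Choi operator $S$ on $A^n B^n$ satisfying $S\ge 0$, $S\le S_{[n]}\otimes I_{B_n}$, and the usual causality chain for the co-strategy Choi $S_{[n]},\ldots,S_{[1]}$ with $\Tr[S_{[1]}]=1$. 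The key identity, obtained from the link product formula \eqref{eq:link-product}, is $\Tr[\Lambda(\Mn\circ\Sn)] = \Tr[S\,\Gamma^{\Mn}]$ and similarly for $\Nn$; I would verify this by unpacking the link product, noting that the $A^n B^n$ systems are matched and partial-transposed, and absorbing the transpose into the definition of $S$.

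The opposite direction — that every feasible $S$ in the primal SDP arises this way — follows from the characterization of measuring co-strategies in Ref.~\cite{Gutoski2012}: given $S$ with the stated constraints, set $T_{[n]}=S_{[n]}$, form the non-measuring co-strategy with Choi $T = S_{[n]}\otimes I_{B_n}$, and take $T_0=S$, $T_1=T-S$. This establishes equality of the primal SDP in \eqref{eq:strategy-min-entropy-primal} with $2^{-D_{\min}^{\varepsilon}(\Nn\|\Mn)}$.

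For the dual \eqref{eq:strategy-min-entropy-dual}, I would write the Lagrangian of \eqref{eq:strategy-min-entropy-primal} with multiplier $\mu_1\ge 0$ for $\Tr[S\,\Gamma^{\Nn}]\ge 1-\varepsilon$, $Y_n\ge 0$ for $S\le S_{[n]}\otimes I_{B_n}$, Hermitian multipliers $Y_{n-1},\ldots,Y_1$ for the successive equality constraints $\Tr_{A_i}[S_{[i]}]=S_{[i-1]}\otimes I_{B_{i-1}}$, and a scalar $\mu_2$ for $\Tr[S_{[1]}]=1$. Maximizing the Lagrangian over $S,S_{[n]},\ldots,S_{[1]}\ge 0$ produces exactly the operator inequalities displayed in \eqref{eq:strategy-min-entropy-dual}, and the constant term $(1-\varepsilon)\mu_1-\mu_2$ becomes the dual objective; boundedness of the primal together with a strictly feasible interior point (e.g.\ take $S_{[i]}$ proportional to identities and $S$ a small multiple of the identity) ensures strong duality via Slater's condition so that the two SDPs agree. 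The main obstacle I anticipate is the first step: cleanly verifying, via the link product formula with its partial transpose, that $\Tr[\Lambda(\Mn\circ\Sn)] = \Tr[S\,\Gamma^{\Mn}]$ and that the resulting $S$ obeys precisely the listed constraints; the Lagrangian calculation is then routine bookkeeping.
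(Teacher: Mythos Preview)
Your proposal is correct and follows essentially the same route as the paper: unfold the definition of $D_{\min}^{\varepsilon}$, use the link product to convert $\Tr[\Lambda(\Mn\circ\Sn)]$ into $\Tr[S\,\Gamma^{\Mn}]$ (absorbing the transpose since the set of sub co-strategies is closed under it), invoke the Gutoski--Watrous characterization of measuring/sub co-strategies for the constraint set, and then derive the dual by the Lagrange multiplier method. Your explicit mention of Slater's condition for strong duality is a detail the paper leaves implicit, but otherwise the arguments coincide.
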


\begin{proof}
	We have
	\begin{align}
		D_d^{\varepsilon} ( \Nn, \Mn ) &= D_{\min}^{\varepsilon} ( \Nn \Vert \Mn ) \\
		&= \sup_{\Sn} D_{\min}^{\varepsilon} ( \Nn \circ \Sn \Vert \Mn \circ \Sn )
	\end{align}
	and
	\begin{equation}
		D_{\min}^{\varepsilon}(\rho \Vert \sigma) = \! \! - \log_2 \inf_{\Lambda \geq 0} \left\{ \Tr[\Lambda \sigma ]: \! \Tr[ \Lambda \rho ] \geq 1 - \varepsilon, \Lambda \leq I \right\}.
	\end{equation}
	We consider $\Sn : B^{n-1} \rightarrow A^n R_n$ to be a co-strategy, so that $\Nn \circ \Sn$ is a quantum state on $R_n B_n$. Let $\Lambda_{R_n B_n}$ be a measurement operator such that $\Tr[\Lambda_{R_n B_n} (\Nn \circ \Sn)]$ is a probability.
	We now have
	\begin{multline}
		D_{\min}^{\varepsilon} ( \Nn \Vert \Mn ) = \\
		-\log_{2}\inf_{\Sn}\left\{
		\begin{array}
			[c]{c}%
			\operatorname{Tr}[ S^T \Gamma^{\Mn}  ]:\\
			\operatorname{Tr}[ S^T \Gamma^{\Nn}]\geq1-\varepsilon
		\end{array}
		\right\}
	\end{multline}
	such that $S$ is the Choi operator of a valid ``sub co-strategy'' corresponding to $\Sn$ and $\Lambda$ and we have exploited the link product from \eqref{eq:link-product}. To write it out explicitly, we use the following constraints on the Choi operator of a sub co-strategy \cite[Section~2.3]{Gutoski2012}:
	\begin{align}
		0 \leq S &\leq S_{[n]} \otimes I_{B_n}, \\
		\Tr_{A_n} [ S_{[n]} ] &= S_{[n-1]} \otimes I_{B_n}, \\
		&\vdots \\
		\Tr [ S_{[1]} ] &= 1.
	\end{align}
	so that
	\begin{multline}
		2^{-D_{\min}^{\varepsilon} ( \Nn \Vert \Mn )} = \\
		\inf_{S, S_{[n]}, \dots, S_{[1]} \geq 0}\left\{
		\begin{array}
			[c]{c}%
			\operatorname{Tr}[S^T \Gamma^{\Mn} ]:\\
			\begin{aligned}
				\operatorname{Tr}[S^T \Gamma^{\Nn}]  &\geq1-\varepsilon,\\
				S &\leq S_{[n]} \otimes I_{B_n}, \\
				\Tr_{A_n} [ S_{[n]} ] &= S_{[n-1]} \otimes I_{B_n}, \\
				&\vdots \\
				\Tr [ S_{[1]} ] &= 1
			\end{aligned}
		\end{array}
		\right\}.
	\end{multline}
	Finally, the full transpose of $S$ corresponds to a legitimate sub co-strategy and since we are optimizing over all of them, we can remove the transpose in the optimization to arrive at \eqref{eq:strategy-min-entropy-primal}.
	
	The dual program is then given by \eqref{eq:strategy-min-entropy-dual},
	which can be verified by means of the Lagrange multiplier method.  The details of this calculation are provided in the Appendixes.
\end{proof}

\begin{proposition}
For strategies $\Nn$ and $\Mn$ taking systems $A^n$ to $B^n$, the distinguishability cost is computable via the following semi-definite program:
\begingroup\allowdisplaybreaks[0]
\begin{multline}
\label{eq:strategy-max-entropy-primal}
2^{D_{\max}^{\varepsilon} ( \Nn \Vert \Mn )} = \\
 \inf_{\substack{\lambda, Y_n, N \geq 0, \\
 N_{[n]}, \ldots, N_{[1]}\geq 0\\Y_1, \dots, Y_{n-1} \in \text{Herm}}} \left\{
\begin{array}
			[c]{c}%
			\lambda: \\
			\begin{aligned}
				N &\leq \lambda \Gamma^{\Mn}\\
			Y_n &\geq \Gamma^{\Nn} - N, \\
			Y_{n-1} \otimes I_{A_n} &\geq \Tr_{B_n} [ Y_n ], \\
			Y_{n-2} \otimes I_{A_{n-1}} &\geq \Tr_{B_{n-1}} [ Y_{n-1} ], \\
									&\vdots \\
			Y_{1} \otimes I_{A_2} &\geq \Tr_{B_2} [ Y_{2} ], \\
			\varepsilon I_{A_1} &\geq \Tr_{B_1} [ Y_{1} ], \\
			\Tr_{B_n}[ N ] &= N_{[n-1]} \otimes I_{A_n}, \\
			\!\! \Tr_{B_{n-1}} [ N_{[n-1]} ] &= N_{[n-2]} \otimes I_{A_{n-1}} \!\!, \\
												   &\vdots \\
			\Tr_{B_{2}} [ N_{[2]} ] &= N_{[1]} \otimes I_{A_2}, \\
			\Tr _{B_1 }[ N_{[1]} ] &= I_{A_1}
		\end{aligned}
		\end{array}
\right\},
\end{multline}
\endgroup
with dual
\begingroup\allowdisplaybreaks
\begin{equation} \label{eq:strategy-max-entropy-dual}
	\sup_{\substack{W_1, W_2, \dots, \\ W_{n+2} \geq 0, \\ X_1, \dots, X_n \in \text{Herm}}} \left\{
		\begin{array}
			[c]{c}%
			\Tr[ \Gamma^{\Nn} W_{n+1} - \varepsilon W_{1} + X_1 ] : \\
			\begin{aligned}
			\!\!\!\Tr [ W_{n+2} \Gamma^{\Mn} ] &\leq 1, \\
			\!\!\!\! W_{n+2} &\geq W_{n+1} + X_n \otimes I_{B_n}, \!\!\! \\
			W_n \otimes I_{B_n} &\geq W_{n+1} ,\\
			W_{n-1} \otimes I_{B_{n-1}} &\geq \Tr_{A_n} [ W_n ] ,\\
			W_{n-2} \otimes I_{B_{n-2}} &\geq \Tr_{A_{n-1}} [ W_{n-1} ] ,\\
									&\vdots \\
			W_{1} \otimes I_{B_1} &\geq \Tr_{A_2} [ W_{2} ] ,\\
			\Tr_{A_n} [X_n ] &\geq X_{n-1} \otimes I_{B_{n-1}}  ,\\
			\Tr_{A_{n-1}} [ X_{n-1} ] &\geq X_{n-2} \otimes I_{B_{n-2}} ,\\
								  &\vdots \\
			\Tr_{A_3} [ X_{3} ] &\geq X_{2} \otimes I_{B_{2}} ,\\
			\Tr_{A_2} [ X_{2} ] &\geq X_1 \otimes I_{B_1}
		\end{aligned}
		\end{array}
	\right\}.
\end{equation}
\endgroup
\end{proposition}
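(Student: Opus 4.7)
The plan is to obtain the primal \eqref{eq:strategy-max-entropy-primal} by rewriting the definition of $D_{\max}^{\varepsilon}(\Nn\Vert\Mn)$ block by block as semi-definite constraints, and then to derive the dual \eqref{eq:strategy-max-entropy-dual} by the Lagrange multiplier method.

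First I would unpack the definition: $2^{D_{\max}^{\varepsilon}(\Nn\Vert\Mn)}$ is the infimum of $\lambda\geq 0$ over all valid $n$-turn strategies $\widetilde{\mathcal{N}}^{(n)}$ satisfying $\tfrac{1}{2}\|\widetilde{\mathcal{N}}^{(n)}-\Nn\|_{\diamond n}\leq\varepsilon$ and $\Gamma^{\widetilde{\mathcal{N}}^{(n)}}\leq\lambda\,\Gamma^{\Mn}$. Setting $N:=\Gamma^{\widetilde{\mathcal{N}}^{(n)}}$, the strategy validity of $\widetilde{\mathcal{N}}^{(n)}$ is encoded by introducing auxiliary positive semi-definite operators $N_{[n-1]},\ldots,N_{[1]}$ subject to the Choi causality relations from \eqref{eq:strategy-constraints}. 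Next, the distance constraint is handled via the dual SDP for the strategy distance from the previous proposition: by strong duality, $\tfrac{1}{2}\|\widetilde{\mathcal{N}}^{(n)}-\Nn\|_{\diamond n}\leq\varepsilon$ is equivalent to the existence of Hermitian $Y_1,\ldots,Y_{n-1}$ and $Y_n\geq 0$ with $Y_n\geq\Gamma^{\Nn}-N$ together with the nested chain of partial-trace inequalities terminating at $\varepsilon I_{A_1}\geq\Tr_{B_1}[Y_1]$. Stitching together the three blocks of constraints (the bound $N\leq\lambda\,\Gamma^{\Mn}$, the strategy constraints on $N$, and the dual distance constraints on the $Y_j$'s) yields exactly \eqref{eq:strategy-max-entropy-primal}.

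For the dual, I would form the Lagrangian by attaching multipliers $W_{n+2}\geq 0$ to $\lambda\,\Gamma^{\Mn}-N\geq 0$; $W_{n+1}\geq 0$ to $Y_n-(\Gamma^{\Nn}-N)\geq 0$; $W_n,\ldots,W_2\geq 0$ and $W_1\geq 0$ to the $Y_j$ cascade (with $W_1$ paired to $\varepsilon I_{A_1}-\Tr_{B_1}[Y_1]\geq 0$); and Hermitian $X_n,\ldots,X_1$ to the equality constraints tying $N,N_{[n-1]},\ldots,N_{[1]}$ together. Then I would take stationarity in each primal variable: in $\lambda$ to produce the scalar bound $\Tr[W_{n+2}\,\Gamma^{\Mn}]\leq 1$; in $N$ and the $N_{[j]}$'s to produce $W_{n+2}\geq W_{n+1}+X_n\otimes I_{B_n}$ along with the $X$-cascade $\Tr_{A_j}[X_j]\geq X_{j-1}\otimes I_{B_{j-1}}$; and in the $Y_j$'s to produce the $W$-cascade $W_{j-1}\otimes I_{B_{j-1}}\geq\Tr_{A_j}[W_j]$ (capped at the top by $W_n\otimes I_{B_n}\geq W_{n+1}$). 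Collecting the surviving terms in the Lagrangian gives the dual objective $\Tr[\Gamma^{\Nn}W_{n+1}-\varepsilon W_1+X_1]$, and hence \eqref{eq:strategy-max-entropy-dual}.

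The hard part will be the bookkeeping in the Lagrangian derivation, since three separate cascading families of constraints interlock through the joint variable $N$, and matching each multiplier to its correct primal constraint so that the stationarity conditions collapse to exactly the form in \eqref{eq:strategy-max-entropy-dual} requires careful index tracking. Strong duality, ensuring that the primal and dual values coincide, can then be justified via Slater's condition, which is straightforward to verify by taking $\widetilde{\mathcal{N}}^{(n)}=\Nn$ together with $\lambda$ sufficiently large and slack $Y_j$'s to produce a strictly feasible primal point.
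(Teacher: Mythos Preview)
Your proposal is correct and follows essentially the same route as the paper: assemble the primal by writing $N=\Gamma^{\widetilde{\mathcal{N}}^{(n)}}$, impose the Choi causality constraints \eqref{eq:strategy-constraints} on $N$, and encode the $\varepsilon$-ball condition via the dual form \eqref{eq:dual-strategy-distance} of the strategy distance; then obtain \eqref{eq:strategy-max-entropy-dual} by Lagrangian duality with exactly the multiplier assignments you describe (the paper's appendix uses $Z_j$ for your $X_j$ but the computation is identical). Your explicit mention of Slater's condition is a small addition the paper leaves implicit.
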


\begin{proof}
	Firstly, we have
	\begin{equation}
		D_{\max}^{\varepsilon} ( \Nn \Vert \Mn ) = \inf_{\widetilde{\mathcal{N}}^{(n)} \approx_{\varepsilon} \Nn} D_{\max} ( \widetilde{\mathcal{N}}^{(n)} \Vert \Mn )
		\label{eq:smooth-dmax-app-proof}
	\end{equation}
	and the dual of the normalized strategy distance from \eqref{eq:dual-strategy-distance}
	\begin{multline}
		\frac{1}{2} \Vert \Nn - \widetilde{\mathcal{N}}^{(n)} \Vert_{\diamond n} = \\
		\inf_{\substack{\mu \in \mathbb{R},Y_n \geq 0, \\ Y_1 \dots Y_{n-1} \in \text{Herm}}}\left\{
		\begin{array}
			[c]{c}%
			\mu : \\
			\begin{aligned}
				Y_n &\geq\Gamma^{\mathcal{N}^{(n)}}-\Gamma^{\widetilde{\mathcal{N}}^{(n)}},\\
				Y_{n-1} \otimes I_{A_n} &\geq \Tr_{B_n}[ Y_n ] ,\\
				Y_{n-2} \otimes I_{A_{n-1}} &\geq \Tr_{B_{n-1}}[ Y_{n-1}] ,\\
				&\vdots\\
				Y_{1} \otimes I_{A_{2}} &\geq \Tr_{B_{2}}[ Y_{2}] ,\\
				\mu I_{A_1} &\geq \Tr_{B_1}[ Y_{1}]
			\end{aligned}
		\end{array}
		\right\}.
		\label{eq:dual-str-dist-proof}
	\end{multline}
	For $\widetilde{\mathcal{N}}^{(n)}$ the optimizer in \eqref{eq:smooth-dmax-app-proof} and exploiting \eqref{eq:max-rel-ent-exact-strat}, we have
	\begin{equation}
		2^{D_{\max}^{\varepsilon} ( \Nn \Vert \Mn )} = 
		\inf_{\lambda \geq 0} \left\{
		\lambda: 
		\Gamma^{\widetilde{\mathcal{N}}^{(n)}} \leq \lambda \Gamma^{\Mn}
		\right\}.
	\end{equation}
	Now we combine these while also adding constraints that ensure that $\widetilde{\mathcal{N}}^{(n)}$ is a valid quantum strategy. Therefore, we use the constraints in \eqref{eq:strategy-constraints} and incorporate them into the optimization. Thus we get
	\begin{multline}
		2^{D_{\max}^{\varepsilon} ( \Nn \Vert \Mn ) }= \\
		\inf_{\substack{\lambda, N, Y_n\geq 0, \\ Y_1, \dots ,Y_{n-1}\in\operatorname{Herm} \\ N_{[1]}, \dots,  N_{[n-1]} \geq 0}} \left\{
		\begin{array}
			[c]{c}%
			\lambda: \\
			\begin{aligned}
				N &\leq \lambda \Gamma^{\Mn},\\
				Y_n &\geq \Gamma^{\Nn} - N ,\\
				Y_{n-1} \otimes I_{A_n} &\geq \Tr_{B_n} [Y_n ] ,\\
				Y_{n-2} \otimes I_{A_{n-1}} &\geq \Tr_{B_{n-1}} [ Y_{n-1} ] ,\\
				&\vdots \\
				Y_{1} \otimes I_{A_2} &\geq \Tr_{B_2} [ Y_{2} ] ,\\
				\varepsilon I_{A_1} &\geq \Tr_{B_1} [ Y_{1} ] ,\\
				\Tr_{B_n}[ N ] &= N_{[n-1]} \otimes I_{A_n} ,\\
				\!\! \Tr_{B_{n-1}} [ N_{[n-1]} ] &= N_{[n-2]} \otimes I_{A_{n-1}} , \!\!\! \\
				&\vdots \\
				\Tr_{B_{2}} [ N_{[2]} ] &= N_{[1]} \otimes I_{A_2} ,\\
				\Tr_{B_1} [ N_{[1]} ] &= I_{A_1}
			\end{aligned}
		\end{array}
		\right\}.\notag
	\end{multline}
	The dual is given by \eqref{eq:strategy-max-entropy-dual},
	which can be verified by means of the Lagrange multiplier method.
\end{proof}

These SDPs are extensions of those presented in \cite[Appendix~C-3]{Wang2019a}, with the difference being that the above ones incorporate causality constraints for quantum strategies. They enable us to efficiently compute these quantities for various scenarios of interest. We do so in the following, where we investigate whether adaptive strategies provide an advantage over parallel ones with respect to the quantum strategy distance, the approximate distillable distinguishability, and the approximate distinguishability cost.

\section{Adaptive vs.~non-adaptive in discrimination and distillation} \label{sec:adaptive-channel-discrimination}

\begin{figure}
\centering

\subfloat[Parallel strategy \label{fig:parallel-strategy}]{\includegraphics[width=0.5\columnwidth]{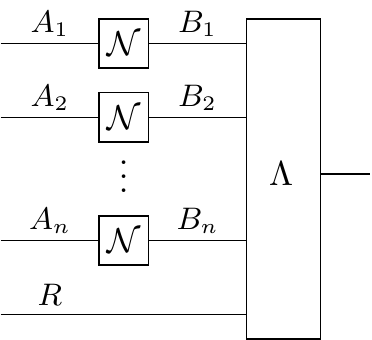}}\\ 
\subfloat[Adaptive strategy \label{fig:adaptive-strategy}]{\includegraphics[width=\columnwidth]{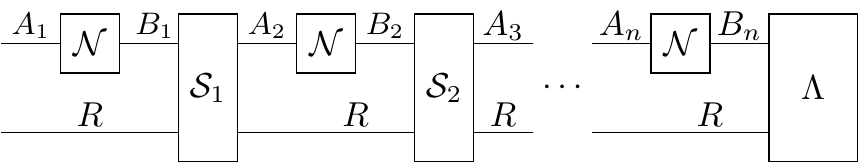}} 


\caption{Processing $n$ uses of a channel $\mathcal{N}_{A \rightarrow B}$ in (a) a parallel manner, and (b) an adaptive manner. In a parallel strategy, the input state to the $n$ copies of the channel can be entangled, but the channels are all applied simultaneously. In an adaptive strategy, the channels are applied sequentially, with interleaving channels $\mathcal{S}_1$ through $\mathcal{S}_{n-1}$ providing adaptive feedback. In a channel discrimination protocol, channels $\mathcal{N}_{A \rightarrow B}$ and $\mathcal{M}_{A \rightarrow B}$ are processed using a common apparatus and the final measurement results are used to decide the correct channel.}
\label{fig:parallel-adaptive-strategies}
\end{figure}

\begin{figure} \centering
\includegraphics[width=3.2in]{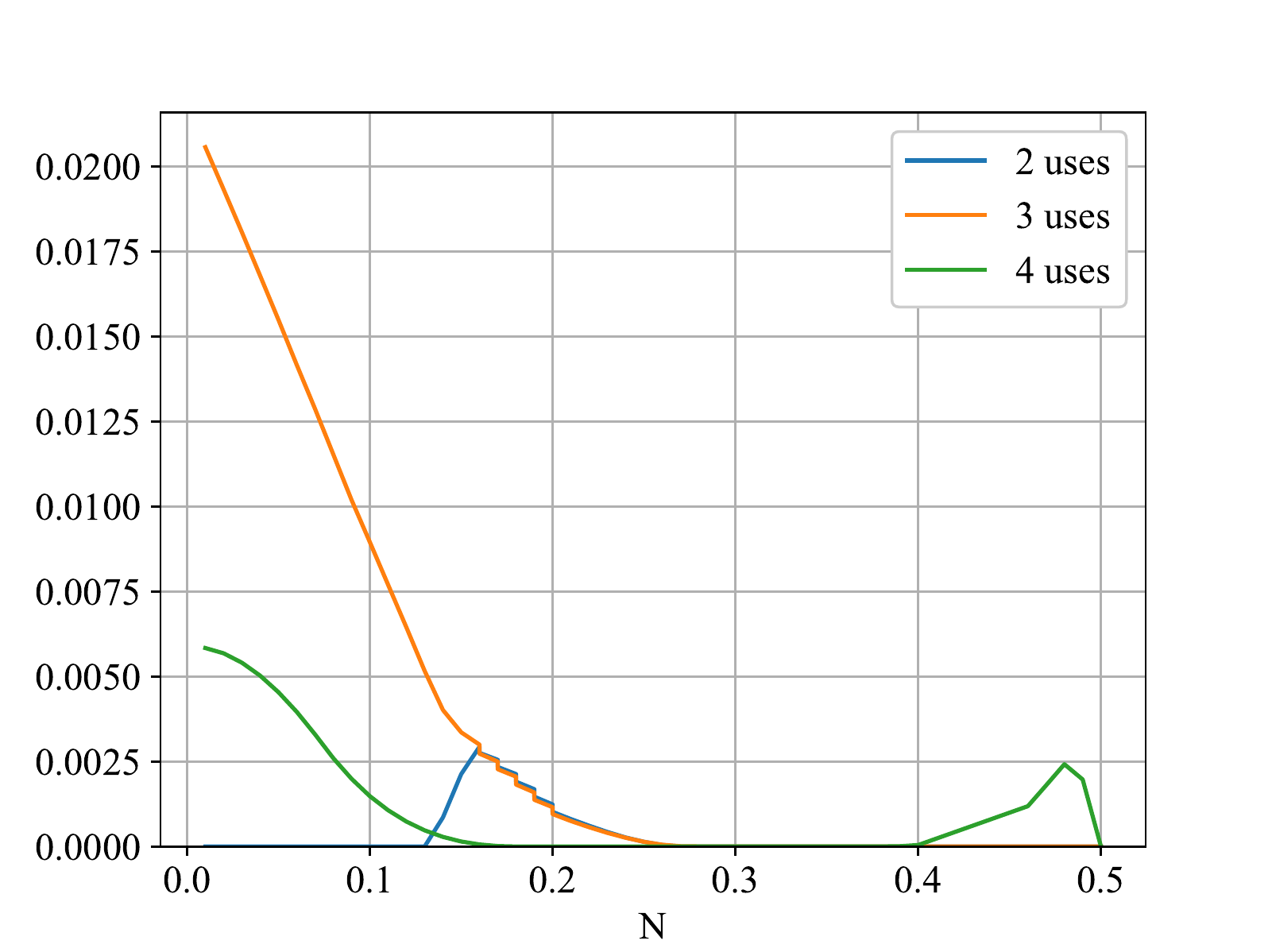}
\caption{Consider two GADCs with $\gamma = 0.2$ and $0.3$, respectively. We plot the \textit{difference} between $\frac{1}{2} \Vert \Nn - \Mn \Vert_{\diamond n}$ and $\frac{1}{2} \Vert \mathcal{N}^{\otimes n} - \mathcal{M}^{\otimes n} \Vert_{\diamond}$, where the strategies $\Nn$ and $\Mn$ each consist of $n$ instances of the same channel. While varying the common parameter $N$, and allowing for different number of channel uses, we see that adaptive strategies offer an advantage in discrimination over parallel ones.}
\label{fig:discrimination-difference-plot}
\end{figure}

Strategies that distinguish between two channels $\mathcal{N}_{A \rightarrow B}$ and $\mathcal{M}_{A \rightarrow B}$ using each channel $n$ times are adaptive in general. Parallel strategies are a special case of adaptive strategies that are of practical interest. Parallel strategies involve a distinguisher inputting a possibly entangled state simultaneously to $n$ instances of the unknown channel. Adaptive strategies, on the other hand, involve $n$ uses of the unknown channel that happen sequentially. Between uses of the unknown channel, the distinguisher can perform a quantum channel so as to boost the chances of success. These two scenarios are described in Figure~\ref{fig:parallel-adaptive-strategies}.

A parallel strategy is a special case of an adaptive strategy~\cite{Chiribella2008}. Adaptive strategies are therefore no less powerful than parallel ones. It is known that in the asymptotic regime, adaptive strategies confer no advantage over non-adaptive ones in asymmetric channel discrimination \cite{Hayashi2009,Berta2018b,Fang2019}. However, in practical situations of interest with a finite number of uses of the unknown channel and specific distinguishability tasks, it is possible that adaptive strategies offer an advantage. 

The formulation of quantum strategies offers a powerful framework in which to analyze this problem. Consider a strategy $\Nn$ such as the one in Figure~\ref{fig:strategy-costrategy-figure} that consists of $n$ uses of the channel $\mathcal{N}_{A \rightarrow B}$. This strategy can be made to interact with a general $n$-turn co-strategy $\Sn$, which encapsulates all possible adaptive operations. To study parallel strategies, $\Nn$ can also be made to interact with a constrained, parallel $n$-turn co-strategy. These two cases are described in Figure~\ref{fig:parallel-adaptive-strategies}. 

In this work, we study the gap between adaptive and parallel strategies for distinguishability tasks involving two different generalized amplitude damping channels (GADCs) \cite{Nielsen2010}. The GADC is a qubit-to-qubit channel that is characterized by a damping parameter $\gamma \in [0,1]$ and a noise parameter $N \in [0,1]$. It models the dynamics of a qubit system that is in contact with a thermal bath. It is used to describe some of the noise in superconducting-circuit based quantum computers \cite{Chirolli2008}. We consider two strategies $\Nn$ and $\Mn$ that each consist of $n$ uses of a particular GADC. The Choi operator of a GADC $\mathcal{A}_{\gamma,N}$ with damping parameter $\gamma$ and noise parameter $N$ is given by
\begin{equation}
	\Gamma_{RB}^{\mathcal{A}_{\gamma,N}}  \coloneqq 
	\begin{bmatrix}
		1-\gamma N & 0 & 0 & \sqrt{1-\gamma}\\
		0 & \gamma N & 0 & 0\\
		0 & 0 & \gamma\left(  1-N\right)  & 0\\
		\sqrt{1-\gamma} & 0 & 0 & 1-\gamma\left(  1-N\right)
	\end{bmatrix}.
\end{equation}

In Figure \ref{fig:discrimination-difference-plot}, we plot the \textit{difference} between the strategy distance $\frac{1}{2} \left\Vert \Nn - \Mn \right\Vert_{\diamond n}$ and the diamond distance $\frac{1}{2} \left\Vert \mathcal{N}^{\otimes n} - \mathcal{M}^{\otimes n} \right\Vert_{\diamond}$. We consider two GADCs with damping parameter $\gamma=0.2$ and $0.3$ respectively, while varying their common noise parameter $N$. We note here that the strategy distance $\frac{1}{2} \Vert \Nn - \Mn \Vert_{\diamond n}$ involves an optimization over all co-strategies, whereas the optimization involved in the diamond distance $\frac{1}{2} \left\Vert\mathcal{N}^{\otimes n} - \mathcal{M}^{\otimes n} \right\Vert_{\diamond}$ is restricted to parallel co-strategies. This enables us to investigate if adaptive strategies offer an advantage over parallel ones in channel discrimination, and we indeed see in Figure~\ref{fig:discrimination-difference-plot} that there is a non-zero gap between the strategy distance and the diamond distance.

\begin{figure} \centering
\includegraphics[width=3.2in]{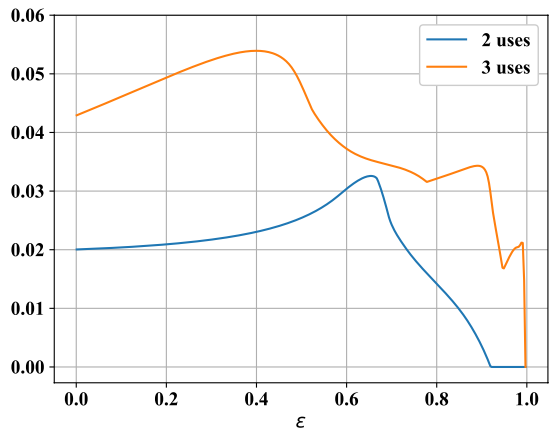}
\caption{Consider two GADCs with $N = 0.2$ and $0.3$ respectively. They both have $\gamma = 0.2$. We plot the \textit{difference} between the distillable distinguishabilities, which is given by $D_{\min}^{\varepsilon} ( \Nn \Vert \Mn )$ for the most general case and by $D_{\min}^{\varepsilon} ( \mathcal{N}^{\otimes n} \Vert \mathcal{M}^{\otimes n} )$ for the case when a distinguisher is limited to a parallel strategy.}
\label{fig:dmin-plot}
\end{figure}

\begin{figure} \centering
	\includegraphics[width=3.2in]{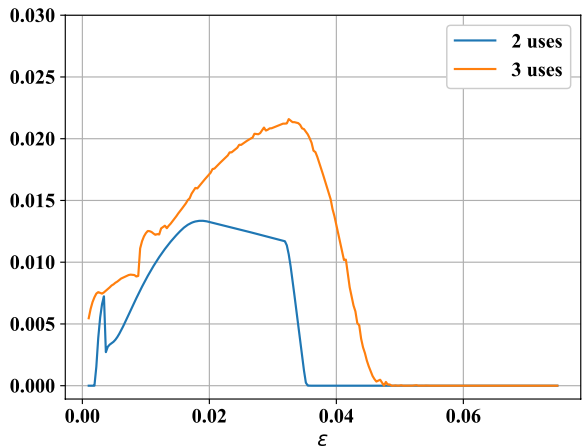}
	\caption{Consider two GADCs with $N = 0.2$ and $0.3$, respectively. They both have $\gamma = 0.2$. We plot the \textit{difference} between the approximate distinguishability cost, which is given by $D_{\max}^{\varepsilon} ( \Nn \Vert \Mn )$ for the most general case and by $D_{\max}^{\varepsilon} ( \mathcal{N}^{\otimes n} \Vert \mathcal{M}^{\otimes n} )$ for the case when an adversary is limited to a parallel strategy.}
	\label{fig:dmax-plot}
\end{figure}

Further, for two GADCs, we study the gap between adaptive and parallel co-strategies for the distillable distinguishability, as well as the distinguishability cost. We consider that both channels have $\gamma=0.2$. The channels differ in their noise parameter $N$, with the first channel having $N=0.2$ and the second channel having $N=0.3$. We use the SDP formulations of the smooth strategy min-relative entropy in \eqref{eq:strategy-min-entropy-primal} and the smooth strategy max-relative entropy in \eqref{eq:strategy-max-entropy-primal} to perform this calculation. For the distillable distinguishability, the results of the calculation are in Figure \ref{fig:dmin-plot}, where we see that there is a gap in the distillable distinguishability between adaptive and parallel strategies. The results of the distinguishability cost are in Figure \ref{fig:dmax-plot}, where again we see a nonzero gap between adaptive and parallel strategies. The code used to produce these numerical results is available with the arXiv version of the paper. It is not yet clear to us how to explain this behavior qualitatively, and so we leave it for future work to do so. 

\section{Conclusion}

In summary, in this paper we reviewed and further developed the resource theory of asymmetric distinguishability for quantum strategies, which is a high-level and flexible framework with which to study quantum interactions. 
We provided semi-definite programs to calculate the distillable distinguishability and the distinguishability cost of quantum strategy boxes, which we used to compare the power of adaptive strategies to parallel ones. 
It is known that for channel discrimination and distillable distinguishability, parallel strategies are equally powerful as adaptive strategies in the asymptotic limit; however, an example we considered shows that adaptive strategies provide an advantage in general when one considers a finite number of channel uses. 

\textit{Note added}: After the first version of our paper was posted to the arXiv in January 2020, there has been further work conducted on this topic recently, with Refs.~\cite{Bavaresco2020, Salek2021} studying the difference between adaptive and parallel strategies in channel distinguishability tasks, and Refs.~\cite{Rexiti2021, Pereira2021} studying the task of discriminating between amplitude damping channels.

\paragraph*{Acknowledgements} VK acknowledges support from the LSU Economic Development Assistantship. MMW acknowledges support from the US National Science Foundation via Grant No. 1907615.

\bibliographystyle{unsrt}
\bibliography{strategies-paper-isit}

\newpage

\pagebreak

\appendix

\section{One-shot distillation and dilution of strategy boxes} \label{sec:one-shot-theorem-proof}

In the following, we provide the proof of Theorem~\ref{thm:approx-one-shot}, which is claimed in Ref.~\cite{Wang2019a}. For completeness, we restate the theorem below:

\begin{theorem}
	In Ref.~\cite{Wang2019a}, the approximate one-shot distillable distinguishability of the strategy box $\left(\Nn,\Mn\right)$ is equal to the smooth strategy min-relative entropy:
	\begin{equation}
	D_{d}^{\varepsilon}(\Nn, \Mn)=D_{\min}^{\varepsilon}(\Nn \Vert \Mn ), 
	\end{equation}
	and the approximate one-shot distinguishability cost is equal to the smooth strategy max-relative entropy:
	\begin{equation}
	D_{c}^{\varepsilon}(\Nn, \Mn )=D_{\max}^{\varepsilon}(\Nn \Vert \Mn). 
	\end{equation}
\end{theorem}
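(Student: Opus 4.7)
The plan is to establish both equalities by reducing to the corresponding state-level results and then carefully verifying that strategy-level constructions preserve the causality constraints \eqref{eq:strategy-constraints}.

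For the distillation equality $D_{d}^{\varepsilon}(\Nn,\Mn) = D_{\min}^{\varepsilon}(\Nn \Vert \Mn)$, the reduction is direct. For any fixed $n$-turn co-strategy $\Sn$, the compositions $\Nn \circ \Sn$ and $\Mn \circ \Sn$ are quantum states on $R_n B_n$, so the conditions in the definition of $D_{d}^{\varepsilon}$ become ordinary one-shot state-distillation conditions for the state box $(\Nn \circ \Sn,\, \Mn \circ \Sn)$. The known state-level identity from Refs.~\cite{Matsumoto2010,Wang2019b} then gives $D_{d}^{\varepsilon}(\Nn \circ \Sn,\,\Mn \circ \Sn) = D_{\min}^{\varepsilon}(\Nn \circ \Sn \Vert \Mn \circ \Sn)$. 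Taking the supremum over $\Sn$ on both sides, interchanging it with the internal supremum over $M$ (allowed since the achievable $M$ depends on $\Sn$ through the accuracy constraint only), and comparing with the definition $D_{\min}^{\varepsilon}(\Nn \Vert \Mn) = \sup_{\Sn} D_{\min}^{\varepsilon}(\Nn \circ \Sn \Vert \Mn \circ \Sn)$ yields the equality.

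For the dilution equality $D_{c}^{\varepsilon}(\Nn,\Mn) = D_{\max}^{\varepsilon}(\Nn \Vert \Mn)$, I would prove both inequalities. For achievability, let $\widetilde{\mathcal{N}}^{(n)}$ be a near-optimizer of the smooth max-relative entropy, so that $\widetilde{\mathcal{N}}^{(n)} \approx_\varepsilon \Nn$ and $\Gamma^{\widetilde{\mathcal{N}}^{(n)}} \leq \lambda \Gamma^{\Mn}$ with $\lambda = 2^{D_{\max}^{\varepsilon}(\Nn\Vert\Mn)}$. For any $M \geq \lambda$, define a dilution map $\Sn$ by linearly extending
\begin{equation}
\Sn(|0\rangle\!\langle 0|) = \widetilde{\mathcal{N}}^{(n)}, \qquad \Sn(|1\rangle\!\langle 1|) = \frac{M \Mn - \widetilde{\mathcal{N}}^{(n)}}{M-1}.
\end{equation}
Then $\Sn(\pi_M) = \Mn$ by construction. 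Positivity of the Choi operator of $\Sn(|1\rangle\!\langle 1|)$ follows from $\Gamma^{\widetilde{\mathcal{N}}^{(n)}} \leq \lambda \Gamma^{\Mn} \leq M \Gamma^{\Mn}$, and the causality constraints \eqref{eq:strategy-constraints} are preserved because they are affine in the Choi operator and hold identically for both $\Gamma^{\Mn}$ and $\Gamma^{\widetilde{\mathcal{N}}^{(n)}}$ (the nested trace conditions at each level produce matching partial identities, which cancel coherently in the linear combination). Hence $D_{c}^{\varepsilon}(\Nn,\Mn) \leq \log_2 M$, and optimizing $M$ toward $\lambda$ gives $D_c^{\varepsilon} \leq D_{\max}^{\varepsilon}$. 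For the converse, given any feasible $\Sn$, set $\widetilde{\mathcal{N}}^{(n)} := \Sn(|0\rangle\!\langle 0|)$, which is $\varepsilon$-close to $\Nn$ in strategy distance. Using $\pi_M = \tfrac{1}{M}|0\rangle\!\langle 0| + (1-\tfrac{1}{M})|1\rangle\!\langle 1|$ together with $\Sn(\pi_M) = \Mn$ and positivity of $\Sn(|1\rangle\!\langle 1|)$ as a strategy yields $\Gamma^{\widetilde{\mathcal{N}}^{(n)}} \leq M \Gamma^{\Mn}$, so $D_{\max}(\widetilde{\mathcal{N}}^{(n)} \Vert \Mn) \leq \log_2 M$. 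Taking the infimum over $\Sn$ delivers $D_{\max}^{\varepsilon} \leq D_c^{\varepsilon}$.

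The principal obstacle I anticipate is the achievability direction of dilution, namely verifying that the affinely constructed $\Sn(|1\rangle\!\langle 1|)$ is a bona fide $n$-turn quantum strategy. The causality constraints \eqref{eq:strategy-constraints} form a tightly coupled chain of partial-trace identities, and while their preservation under linear interpolation of strategy Choi operators is essentially linear-algebraic, each level of the chain must be checked to descend correctly from the endpoints $\Gamma^{\widetilde{\mathcal{N}}^{(n)}}$ and $M \Gamma^{\Mn}$, culminating in $\operatorname{Tr}_{B_1}$ producing $I_{A_1}$ from the normalized combination. The distillation direction is essentially cosmetic once the state-level identity is invoked, since composition with a co-strategy reduces every quantity to its state-box analogue.
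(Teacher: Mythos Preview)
Your proposal is correct and follows essentially the paper's approach. For dilution the achievability construction is identical, and your converse (extracting $\Gamma^{\widetilde{\mathcal{N}}^{(n)}} \leq M\,\Gamma^{\Mn}$ from positivity of $\Sn(|1\rangle\!\langle 1|)$) is precisely what the paper's data-processing invocation for $D_{\max}$ unwinds to. For distillation you take a slightly more modular route, invoking the state-level identity $D_d^{\varepsilon}(\rho,\sigma)=D_{\min}^{\varepsilon}(\rho\Vert\sigma)$ from \cite{Wang2019b} as a black box and then optimizing over co-strategies, whereas the paper reproves both inequalities directly in the strategy setting via an explicit measurement-channel construction and data processing; the content is the same. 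Your anticipated obstacle regarding the causality constraints \eqref{eq:strategy-constraints} on $\Sn(|1\rangle\!\langle 1|)$ is not a real obstacle: those constraints are affine equalities in the Choi operator, so any normalized affine combination of strategy Choi operators inherits them automatically, and the positivity you already verified is the only substantive check.
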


\subsection{One-shot exact distillable distinguishability is strategy min-relative entropy}

We first prove the inequality 
\begin{equation}
	D_d^{0} (\mathcal{N}^{(n)}, \mathcal{M}^{(n)} ) \geq D_{\min} (\mathcal{N}^{(n)} \Vert \mathcal{M}^{(n)}),
\end{equation}
Let $\Theta$ be an arbitrary $n$-turn co-strategy that interacts with strategies $\Nn$ or $\Mn$ to yield a state on $R_n B_n$. Consider the projector $0 \leq \Lambda_{R_n B_n} \leq I_{R_n B_n}$ onto the support of $\Theta \circ \Nn$. Consider a post-processing of the output state $\omega_{R_n B_n}$ as follows:
\begin{multline}
	\omega_{R_n B_n} \rightarrow \Tr \!\left[ \Lambda_{R_n B_n} \omega_{R_n B_n} \right] |0\rangle \!\langle 0 |_X \\ + \Tr \!\left[ \left( I_{R_n B_n} - \Lambda_{R_n B_n} \right) \omega_{R_n B_n} \right] |1\rangle \!\langle 1 |_X.
\end{multline}
If the unknown strategy is $\Nn$, then the interaction with $\Theta$ followed by the above post-processing  yields $|0\rangle \!\langle 0 |_X$. If the unknown strategy is $\Mn$, then the final state is $\pi_M$ with 
\begin{equation}
	M = \frac{1}{\Tr \!\left[ \Lambda_{R_n B_n} \left( \Theta \circ \Mn \right) \right]},
\end{equation}
or equivalently,
\begin{equation}
	\log_2 M = D_{\min} ( \Theta \circ \Nn \Vert \Theta \circ \Mn ).
\end{equation}
Taking a supremum over all interacting co-strategies $\Theta$, we get
\begin{align}
	D_d^0 ( \Nn, \Mn ) &\geq \sup_{\Theta} D_{\min} ( \Theta \circ \Nn \Vert \Theta \circ \Mn ) \\
	&= D_{\min} ( \Nn \Vert \Mn ). \label{eq:exact-dist-achievability}
\end{align}

Next we prove the opposite inequality 
\begin{equation}
	D_d^{0} (\mathcal{N}^{(n)}, \mathcal{M}^{(n)} ) \leq D_{\min} (\mathcal{N}^{(n)} \Vert \mathcal{M}^{(n)})	
\end{equation}
which is a consequence of the data-processing inequality for the $D_{\min}$ strategy divergence \cite{Wang2019a}.
Consider an arbitrary $n$-turn co-strategy $\Theta$ that interacts with $\Nn$ to give $|0\rangle \!\langle 0 |$, and with $\Mn$ to give $\pi_M$. Then we can write 
\begin{align}
	D_{\min}(\Nn \Vert \Mn) &\geq D_{\min} (\Theta \circ \Nn \Vert \Theta \circ \Mn) \\
	&= D_{\min} ( |0\rangle \!\langle 0 | \Vert \pi_M) \\
	&= \log_2 M,
\end{align}
which yields 
\begin{equation}
	D_{\min} (\mathcal{N}^{(n)} \Vert \mathcal{M}^{(n)}) \geq D_d^{0} (\mathcal{N}^{(n)}, \mathcal{M}^{(n)} ).	\label{eq:exact-dist-converse}
\end{equation}
Putting \eqref{eq:exact-dist-achievability} and \eqref{eq:exact-dist-converse} together, we get
\begin{equation}
	D_d^{0} (\mathcal{N}^{(n)}, \mathcal{M}^{(n)} ) = D_{\min} (\mathcal{N}^{(n)} \Vert \mathcal{M}^{(n)}). 
\end{equation}

\subsection{One-shot approximate distillable distinguishability is smooth strategy min-relative entropy}

Here our aim is to prove
\begin{equation}
	D_d^{\varepsilon}(\Nn, \Mn) = D_{\min}^{\varepsilon} (\Nn \Vert \Mn).
\end{equation}
First we prove the inequality
\begin{equation}
	D_d^{\varepsilon}(\Nn, \Mn) \geq D_{\min}^{\varepsilon} (\Nn \Vert \Mn)	.
\end{equation}
Let $\Theta$ be an arbitrary interacting $n$-turn co-strategy and $\Lambda_{R_n B_n}$ a corresponding measurement operator satisfying $0 \leq \Lambda_{R_n B_n} \leq I_{R_n B_n}$ and 
\begin{equation}
	\Tr [ \Lambda_{R_n B_n} ( \Theta \circ \Nn  ) ] \geq 1 - \varepsilon. \label{eq:epsilon-close-probability}
\end{equation}
Consider, as in the exact case, a post-processing of the final state $\omega_{R_n B_n}$ by the measurement channel $\mathcal{L}_{R_nR_n\to X}$:
\begin{multline}
	\mathcal{L}_{R_nR_n\to X}(\omega_{RB^n}) := \Tr \!\left[ \Lambda_{R_n B_n} \omega_{R_n B_n} \right] |0\rangle \!\langle 0 |_X  \\ + \Tr \!\left[ \left( I_{R_n B_n} - \Lambda_{R_n B_n} \right) \omega_{R_n B_n} \right] |1\rangle \!\langle 1 |_X. 
\end{multline}
Using \eqref{eq:epsilon-close-probability}, we can conclude that $\mathcal{L} \circ \Theta \circ \Nn \approx_{\varepsilon} |0\rangle \!\langle 0 |$. Further, for 
\begin{equation}
	M = \frac{1}{\Tr [ \Lambda_{R_n B_n} ( \Theta \circ \Mn )  ] },
\end{equation}
we have $\mathcal{L} \circ \Theta \circ \Mn = \pi_M$.
Taking a supremum over all interacting co-strategies $\Theta$ and measurement channels $\mathcal{L}_{R_nR_n\to X}$, we get
\begin{align}
	D_d^{\varepsilon} (\Nn, \Mn) &\geq \sup_{\Theta} D_{\min}^{\varepsilon} ( \Theta \circ \Nn  \Vert \Theta \circ \Mn  ) \notag
	\\
	&= D_{\min}^{\varepsilon} ( \Nn \Vert \Mn ). \label{eq:approx-dist-achievability}
\end{align}

Next, we use data processing to prove the reverse inequality
\begin{equation}
	D_d^{\varepsilon}(\Nn, \Mn) \leq D_{\min}^{\varepsilon} (\Nn \Vert \Mn).
\end{equation}
Consider an $n$-turn co-strategy $\Theta$ and measurement channel $\mathcal{L}_{R_nR_n\to X}$ such that $\frac{1}{2} \left\Vert \mathcal{L}\circ \Theta \circ \Nn - |0\rangle \!\langle 0 | \right\Vert_1 \leq \varepsilon$. By a direct calculation with trace distance, we find that
\begin{align}
	\varepsilon &\geq \frac{1}{2} \left\Vert \mathcal{L} \circ \Theta \circ \Nn - |0\rangle \!\langle 0 | \right\Vert_1 \\
	&= 1 - \Tr[\Lambda (\Theta \circ \Nn) ].
\end{align}
We conclude that $\Tr[\Lambda (\Theta \circ \Nn) ] \geq 1 - \varepsilon$. In the definition of $D_{\min}^{\varepsilon}( \Theta \circ \Nn \Vert \pi_M )$, we can take the final measurement operator to be $\Lambda_{RB}$. This leaves us with $\Tr \!\left[ \Lambda_{R_n B_n} (\Theta \circ \Nn)  \right] \geq 1 - \varepsilon$ and $\Tr \!\left[ \Lambda_{R_n B_n} (\Theta \circ \Mn)  \right] = 1/M$. Since the definition of $D_{\min}^{\varepsilon}$ for strategies involves an optimization over co-strategies and measurement operators, we conclude that
\begin{align}
	D_{\min}^{\varepsilon} ( \Nn \Vert \Mn ) &\geq D_{\min}^{\varepsilon} ( \mathcal{L} \circ \Theta \circ \Nn \Vert \pi_M ) \\														&\geq \log_2 M 	,	
\end{align}
where the last inequality follows from Ref.~\cite[Appendix~F-1]{Wang2019b}.
Since the scheme considered for distillation is arbitrary, we conclude that
\begin{equation}
	D_{\min}^{\varepsilon} ( \Nn \Vert \Mn ) \geq  D_d^{\varepsilon} ( \Nn, \Mn ). 
	\label{eq:approx-dist-converse}
\end{equation}
Combining \eqref{eq:approx-dist-achievability} and \eqref{eq:approx-dist-converse}, we obtain the desired result:
\begin{equation}
	D_d^{\varepsilon} ( \Nn, \Mn ) = D_{\min}^{\varepsilon} ( \Nn \Vert \Mn ).
\end{equation}

\subsection{One-shot exact distinguishability cost is strategy max-relative entropy}

First, we aim to prove the inequality
\begin{equation}
	D_c^{0} (\Nn, \Mn) \leq D_{\max}(\Nn \Vert \Mn).
\end{equation}
To do so, we first let $\lambda$ be such that 
\begin{equation} \label{eq:exact-strategy-comparison}
	\Nn \leq 2^{\lambda} \Mn. 
\end{equation}
This means that
\begin{equation}
	\mathcal{N}'^{(n)} := \frac{2^{\lambda} \Mn - \Nn }{2^{\lambda} - 1}
\end{equation}
is a quantum strategy. Further, if the Choi operators of $\Nn$ and $\Mn$ are $\Gamma^{\Nn}$ and $\Gamma^{\Mn}$ respectively, then
\begin{equation}
	\frac{2^{\lambda} \Gamma^{\Mn} - \Gamma^{\Nn}}{2^{\lambda} - 1}
\end{equation}
is the Choi operator of $\mathcal{N}'^{(n)}$ (by linearity).

Consider an arbitrary $n$-turn co-strategy that is made to act as follows, beginning with system $X$. It acts as follows:
\begin{equation}
	\sigma_X \rightarrow ( \Theta \circ \Nn ) \bra{0} \sigma_X \ket{0} + ( \Theta \circ \mathcal{N}'^{(n)} ) \bra{1} \sigma_X \ket{1}.
\end{equation}
In the case that $\sigma_X = |0\rangle \!\langle 0 |_X$, then the output is $\Theta \circ \Nn$. If the input is $\pi_M$ where $M = 2^{\lambda}$, then the output is $\Theta \circ \Mn$. 

For this particular choice of transformation, we obtain a distinguishability cost of $\lambda$, so if one optimizes over all protocols, one obtains $D_c^{0} \left(\Nn, \Mn\right) \leq \lambda$. Now if we optimize over all $\lambda$ such that \eqref{eq:exact-strategy-comparison} holds, we obtain
\begin{equation}
	D_c^{0} ( \Nn, \Mn ) \leq D_{\max} ( \Nn \Vert \Mn ). \label{eq:exact-cost-converse}
\end{equation}

The opposite inequality follows from the data processing inequality for the strategy max-relative entropy \cite{Wang2019a}. Let $\Theta$ be a strategy satisfying 
\begin{align}
	\Theta\left(|0\rangle \!\langle 0 |\right) &= \Nn, \text{ and } \\
	\Theta\left(\pi_M \right) &= \Mn,
\end{align}
with $\log_2 M = D_c^0  (\Nn , \Mn )$.
Then consider the following chain of reasoning:
\begin{align}
	\log_2 M &= D_c^0  (\Nn , \Mn ) \\
	&= D_{\max} ( |0\rangle \!\langle 0 | \Vert \pi_M ) \\
	&\geq D_{\max} \left( \Theta(|0\rangle \!\langle 0 |) \Vert \Theta(\pi_M) \right) \\
	&= D_{\max} ( \Nn \Vert \Mn ).
\end{align}
This lets us conclude that 
\begin{equation}
	D_c^{0} ( \Nn, \Mn ) \geq D_{\max} ( \Nn \Vert \Mn ). \label{eq:exact-cost-achievability}
\end{equation}
Putting together \eqref{eq:exact-cost-converse} and \eqref{eq:exact-cost-achievability}, we obtain the desired result, which is
\begin{equation}
	D_c^{0} ( \Nn, \Mn ) = D_{\max} ( \Nn \Vert \Mn ).
\end{equation}

\subsection{One-shot approximate distinguishability cost is smooth strategy max-relative entropy}

Here we aim to prove that
\begin{equation}
	D_c^{\varepsilon} ( \Nn, \Mn ) = D_{\max}^{\varepsilon} ( \Nn \Vert \Mn ).
\end{equation}
First, we prove the inequality
\begin{equation}
	D_c^{\varepsilon} ( \Nn, \Mn ) \leq D_{\max}^{\varepsilon} ( \Nn \Vert \Mn ).  
\end{equation}
To do so, we consider a quantum strategy $\mathcal{N}'^{(n)} \approx_{\varepsilon} \Nn$ (which means that $\frac{1}{2} \left\Vert \mathcal{N}'^{(n)} - \Nn \right\Vert \leq \varepsilon$). We use the construction for the exact distinguishability cost, but instead for $\mathcal{N}'^{(n)}$, and therefore obtain
\begin{equation}
	D_c^{\varepsilon} ( \Nn, \Mn ) \leq D_{\max}^{\varepsilon} ( \mathcal{N}'^{(n)} \Vert \Mn ).
\end{equation}
By optimizing the above over all $\mathcal{N}'^{(n)}$ satisfying $\mathcal{N}'^{(n)} \approx_{\varepsilon} \Nn$, we obtain
\begin{equation}
	D_c^{\varepsilon} ( \Nn, \Mn ) \leq D_{\max}^{\varepsilon} ( \Nn \Vert \Mn ). \label{eq:approx-cost-converse}
\end{equation}

To prove the reverse inequality 
\begin{equation}
	D_c^{\varepsilon} ( \Nn, \Mn ) \geq D_{\max}^{\varepsilon} ( \Nn \Vert \Mn ),
\end{equation}
we again use data-processing arguments. Consider first a strategy $\Theta$ such that 
\begin{align}
	\Theta \left( |0\rangle \!\langle 0 | \right) & \approx_{\varepsilon} \Nn, \text{ and} \\
	\Theta \left( \pi_M \right) &= \Mn,
\end{align}
with $\log_2 M = D_c^0  (\Nn , \Mn )$.
Now consider the following:
\begin{align} 
	D_c^{\varepsilon} ( \Nn, \Mn ) &= \log_2 M \\
	&= D_{\max} ( |0\rangle \!\langle 0 | \Vert \pi_M ) \\
	&\geq D_{\max} ( \Theta ( |0\rangle \!\langle 0 | ) \Vert \Theta ( \pi_M ) ) \\
	&= D_{\max} ( \Theta ( |0\rangle \!\langle 0 | ) \Vert \Mn ) \\
	&\geq D_{\max}^{\varepsilon} ( \Nn \Vert \Mn ) \label{eq:approx-cost-achievability}.
\end{align}
Putting together \eqref{eq:approx-cost-converse} and \eqref{eq:approx-cost-achievability}, we get the desired result.

\newpage
\pagebreak
\onecolumngrid

\section{Derivation of semi-definite programming duals}

Here, we provide full details of how to arrive at the duals for the semi-definite programs for the normalized strategy distance, the smooth strategy min-relative entropy, and the smooth strategy max-relative entropy.

\subsection{Background}

Suppose that a semi-definite program is given in primal form as follows:%
\begin{equation}
	\sup_{X\geq0}\left\{  \operatorname{Tr}[AX]:\Phi_{1}(X)=B_{1},\Phi_{2}(X)\leq
	B_{2}\right\}  , \label{eq:generic-SDP-primal}%
\end{equation}
then its dual is given by%
\begin{equation}
	\inf_{Y_{1}\in\text{Herm},Y_{2}\geq0}\big\{  \operatorname{Tr}[B_{1}%
	Y_{1}]+\operatorname{Tr}[B_{2}Y_{2}]: \Phi_{1}^{\dag}(Y_{1})+\Phi_{2}^{\dag
	}(Y_{2})\geq A\big\}  . \label{eq:generic-SDP-dual}%
\end{equation}
We use
\eqref{eq:generic-SDP-primal} and \eqref{eq:generic-SDP-dual}\ in the
forthcoming sections to derive the various duals presented in our paper.

Alternatively, the following is useful as well. If the primal can be written
as%
\begin{equation}
	\inf_{X\geq0}\left\{  \operatorname{Tr}[(-A)X]:\Phi_{1}(X)=B_{1},\Phi
	_{2}(X)\leq B_{2}\right\}  . \label{eq:other-opt-primal}%
\end{equation}
Then its dual is given by%
\begin{equation}
	\sup_{Y_{1}\in\text{Herm},Y_{2}\geq0}\big\{  -\operatorname{Tr}[B_{1}%
	Y_{1}]-\operatorname{Tr}[B_{2}Y_{2}]: \Phi_{1}^{\dag}(Y_{1})+\Phi_{2}^{\dag
	}(Y_{2})\geq A\big\}  , \label{eq:other-opt-dual}%
\end{equation}
which comes about by applying a minus sign to \eqref{eq:generic-SDP-primal}
and carrying it through.

\subsection{Normalized strategy distance}

First, we repeat the primal for the normalized strategy distance given in~\eqref{eq:primal-strategy-dist} in the main text:%
\begin{equation}
	\sup_{S,S_{[1]},\ldots,S_{[n]}\geq0}\operatorname{Tr}[S(\Gamma^{\mathcal{N}%
		^{(n)}}-\Gamma^{\mathcal{M}^{(n)}})]
\end{equation}
subject to%
\begin{align}
	S &  \leq S_{[n]}\otimes I_{B_{n}},\\
	\operatorname{Tr}_{A_{n}}[S_{[n]}] &  =S_{[n-1]}\otimes I_{B_{n-1}},\\
	\operatorname{Tr}_{A_{n-1}}[S_{[n-1]}] &  =S_{[n-2]}\otimes I_{B_{n-2}},\\
	&  \vdots\notag\\
	\operatorname{Tr}_{A_{2}}[S_{[2]}] &  =S_{[1]}\otimes I_{B_{1}},\\
	\operatorname{Tr}[S_{[1]}] &  =1.
\end{align}
Now mapping to \eqref{eq:generic-SDP-primal}, we find that%
\begin{align}
	X &  =\text{diag}(S,S_{[n]},S_{[n-1]},\ldots,S_{[2]},S_{[1]}),\\
	A &  =\text{diag}(\Gamma^{\mathcal{N}^{(n)}}-\Gamma^{\mathcal{M}^{(n)}%
	},0,0,\ldots,0,0),\\
	\Phi_{1}(X) &  =\text{diag}(\operatorname{Tr}_{A_{n}}[S_{[n]}]-S_{[n-1]}%
	\otimes I_{B_{n-1}},\nonumber\\
	&  \qquad\operatorname{Tr}_{A_{n-1}}[S_{[n-1]}]-S_{[n-2]}\otimes I_{B_{n-2}%
	},\ldots,\nonumber\\
	&  \qquad\operatorname{Tr}_{A_{2}}[S_{[2]}]-S_{[1]}\otimes I_{B_{1}%
	},\operatorname{Tr}[S_{[1]}]),\\
	B_{1} &  =\text{diag}(0,0,\ldots,0,1),\\
	\Phi_{2}(X) &  =S-S_{[n]}\otimes I_{B_{n}},\\
	B_{2} &  =0.
\end{align}
We should now figure out the adjoints of $\Phi_{1}$ and $\Phi_{2}$. Consider
that%
\begin{equation}
	\operatorname{Tr}[Y_{i}\Phi_{i}(X)]=\operatorname{Tr}[\Phi_{i}^{\dag}%
	(Y_{i})X]\text{ for }i\in\left\{  1,2\right\}  .
\end{equation}
Set%
\begin{equation}
	Y_{1}=\text{diag}(Z_{n-1},Z_{n-2},\ldots,Z_{1},\mu).\label{eq:Y1-var}%
\end{equation}
Then we find that%
\begin{align}
	&  \operatorname{Tr}[Y_{1}\Phi_{1}(X)]\nonumber\\
	&  =\operatorname{Tr}[Z_{n-1}\left(  \operatorname{Tr}_{A_{n}}[S_{[n]}%
	]-S_{[n-1]}\otimes I_{B_{n-1}}\right)  ]\nonumber\\
	&  \qquad+\operatorname{Tr}[Z_{n-2}(\operatorname{Tr}_{A_{n-1}}[S_{[n-1]}%
	]-S_{[n-2]}\otimes I_{B_{n-2}})]+\cdots\nonumber\\
	&  \qquad+\operatorname{Tr}[Z_{1}(\operatorname{Tr}_{A_{2}}[S_{[2]}%
	]-S_{[1]}\otimes I_{B_{1}})]+\mu\operatorname{Tr}[S_{[1]}]\\
	&  =\operatorname{Tr}[(Z_{n-1}\otimes I_{A_{n}})S_{[n]}]-\operatorname{Tr}%
	[\operatorname{Tr}_{B_{n-1}}[Z_{n-1}]S_{[n-1]}]\nonumber\\
	&  \qquad+\operatorname{Tr}[(Z_{n-2}\otimes I_{A_{n-1}})S_{[n-1]}%
	]-\operatorname{Tr}[\operatorname{Tr}_{B_{n-2}}[Z_{n-2}]S_{[n-2]}%
	]+\cdots\nonumber\\
	&  \qquad+\operatorname{Tr}[(Z_{1}\otimes I_{A_{2}})S_{[2]}]-\operatorname{Tr}%
	[\operatorname{Tr}_{B_{1}}[Z_{1}]S_{[1]}]+\operatorname{Tr}[\mu I_{A_{1}%
	}S_{[1]}]\\
	&  =\operatorname{Tr}[(Z_{n-1}\otimes I_{A_{n}})S_{[n]}]+\operatorname{Tr}%
	[(Z_{n-2}\otimes I_{A_{n-1}}-\operatorname{Tr}_{B_{n-1}}[Z_{n-1}%
	])S_{[n-1]}]\nonumber\\
	&  \qquad+\operatorname{Tr}[(Z_{n-3}\otimes I_{A_{n-2}}-\operatorname{Tr}%
	_{B_{n-2}}[Z_{n-2}])S_{[n-2]}]+\cdots\nonumber\\
	&  \qquad+\operatorname{Tr}[(Z_{1}\otimes I_{A_{2}}-\operatorname{Tr}_{B_{2}%
	}[Z_{2}])S_{[2]}]+\operatorname{Tr}[(\mu I_{A_{1}}-\operatorname{Tr}_{B_{1}%
	}[Z_{1}])S_{[1]}].
\end{align}
So this implies that%
\begin{equation}
	\Phi_{1}^{\dag}(Y_{1})=\text{diag}(0,Z_{n-1}\otimes I_{A_{n}},Z_{n-2}\otimes
	I_{A_{n-1}}-\operatorname{Tr}_{B_{n-1}}[Z_{n-1}], \ldots,Z_{1}\otimes I_{A_{2}}-\operatorname{Tr}_{B_{2}}[Z_{2}],\mu I_{A_{1}%
	}-\operatorname{Tr}_{B_{1}}[Z_{1}]). \label{eq:adjoint-phi-1}
\end{equation}
Now set%
\begin{equation}
	Y_{2}=Z_{n},
\end{equation}
and we find that%
\begin{align}
	\operatorname{Tr}[Y_{2}\Phi_{2}(X)] &  =\operatorname{Tr}[Z_{n}(S-S_{[n]}%
	\otimes I_{B_{n}})]\\
	&  =\operatorname{Tr}[Z_{n}S]-\operatorname{Tr}[\operatorname{Tr}_{B_{n}%
	}[Z_{n}]S_{[n]}],
\end{align}
so that%
\begin{equation}
	\Phi_{2}^{\dag}(Y_{2})=\text{diag}(Z_{n},-\operatorname{Tr}_{B_{n}}%
	[Z_{n}],0,\ldots,0,0).
\end{equation}
We finally find that%
\begin{multline}
	\Phi_{1}^{\dag}(Y_{1})+\Phi_{2}^{\dag}(Y_{2})=\text{diag}(Z_{n},Z_{n-1}\otimes
	I_{A_{n}}-\operatorname{Tr}_{B_{n}}[Z_{n}], Z_{n-2}\otimes I_{A_{n-1}}-\operatorname{Tr}_{B_{n-1}}[Z_{n-1}],\ldots,\\
	Z_{1}\otimes I_{A_{2}}-\operatorname{Tr}_{B_{2}}[Z_{2}],\mu I_{A_{1}%
	}-\operatorname{Tr}_{B_{1}}[Z_{1}]),
\end{multline}
and so $\Phi_{1}^{\dag}(Y_{1})+\Phi_{2}^{\dag}(Y_{2})\geq A$ is equivalent to
the following conditions:%
\begin{align}
	Z_{n} &  \geq\Gamma^{\mathcal{N}^{(n)}}-\Gamma^{\mathcal{M}^{(n)}},\\
	Z_{n-1}\otimes I_{A_{n}} &  \geq\operatorname{Tr}_{B_{n}}[Z_{n}],\\
	Z_{n-2}\otimes I_{A_{n-1}} &  \geq\operatorname{Tr}_{B_{n-1}}[Z_{n-1}],\\
	&  \vdots\notag\\
	Z_{1}\otimes I_{A_{2}} &  \geq\operatorname{Tr}_{B_{2}}[Z_{2}],\\
	\mu I_{A_{1}} &  \geq\operatorname{Tr}_{B_{1}}[Z_{1}].
\end{align}
So then the dual is given by plugging into \eqref{eq:generic-SDP-dual}:%
\begin{equation}
	\inf_{\substack{Z_{n-1},Z_{n-2},\ldots,Z_{1}\in\text{Herm},\\\mu\in
			\mathbb{R},Z_{n}\geq0}}\mu,
\end{equation}
subject to%
\begin{align}
	Z_{n}  & \geq\Gamma^{\mathcal{N}^{(n)}}-\Gamma^{\mathcal{M}^{(n)}},\\
	Z_{n-1}\otimes I_{A_{n}}  & \geq\operatorname{Tr}_{B_{n}}[Z_{n}],\\
	Z_{n-2}\otimes I_{A_{n-1}}  & \geq\operatorname{Tr}_{B_{n-1}}[Z_{n-1}],\\
	& \vdots \notag\\
	Z_{1}\otimes I_{A_{2}}  & \geq\operatorname{Tr}_{B_{2}}[Z_{2}],\\
	\mu I_{A_{1}}  & \geq\operatorname{Tr}_{B_{1}}[Z_{1}].
\end{align}

\subsection{Smooth strategy min-relative entropy}

First, we repeat the primal for the smooth strategy min-relative entropy given  in~\eqref{eq:strategy-min-entropy-primal} in the main text:%
\begin{equation}
	\inf_{S,S_{[1]},\ldots,S_{[n]}\geq0}\operatorname{Tr}[S\ \Gamma^{\mathcal{M}%
		^{(n)}}]
\end{equation}
subject to%
\begin{align}
	\operatorname{Tr}[S\ \Gamma^{\mathcal{N}^{(n)}}]  &  \geq1-\varepsilon\\
	S  &  \leq S_{[n]}\otimes I_{B_{n}},\\
	\operatorname{Tr}_{A_{n}}[S_{[n]}]  &  =S_{[n-1]}\otimes I_{B_{n-1}},\\
	\operatorname{Tr}_{A_{n-1}}[S_{[n-1]}]  &  =S_{[n-2]}\otimes I_{B_{n-2}},\\
	&  \vdots\notag\\
	\operatorname{Tr}_{A_{2}}[S_{[2]}]  &  =S_{[1]}\otimes I_{B_{1}},\\
	\operatorname{Tr}[S_{[1]}]  &  =1.
\end{align}
Now mapping to \eqref{eq:other-opt-primal}, we find that%
\begin{align}
	X  &  =\text{diag}(S,S_{[n]},S_{[n-1]},\ldots,S_{[2]},S_{[1]}),\\
	A  &  =\text{diag}(-\Gamma^{\mathcal{M}^{(n)}},0,0,\ldots,0,0),\\
	\Phi_{1}(X)  &  =\text{diag}(\operatorname{Tr}_{A_{n}}[S_{[n]}]-S_{[n-1]}%
	\otimes I_{B_{n-1}},\nonumber\\
	&  \qquad\operatorname{Tr}_{A_{n-1}}[S_{[n-1]}]-S_{[n-2]}\otimes I_{B_{n-2}%
	},\ldots,\nonumber\\
	&  \qquad\operatorname{Tr}_{A_{2}}[S_{[2]}]-S_{[1]}\otimes I_{B_{1}%
	},\operatorname{Tr}[S_{[1]}]),\\
	B_{1}  &  =\text{diag}(0,0,\ldots,0,1),\\
	\Phi_{2}(X)  &  =\text{diag}(-\operatorname{Tr}[S\ \Gamma^{\mathcal{N}^{(n)}%
	}],S-S_{[n]}\otimes I_{B_{n}}),\\
	B_{2}  &  =\text{diag}(-\left(  1-\varepsilon\right)  ,0).
\end{align}
The variable $Y_{1}$ is the same as in \eqref{eq:Y1-var}, and the adjoint of
$\Phi_{1}$ is the same as in \eqref{eq:adjoint-phi-1}. Let us set%
\begin{equation}
	Y_{2}=\text{diag}(\mu_{1},Z_{n}).
\end{equation}
Then we find that%
\begin{align}
	\operatorname{Tr}[Y_{2}\Phi_{2}(X)]  &  =-\mu_{1}\operatorname{Tr}%
	[S\ \Gamma^{\mathcal{N}^{(n)}}]+\operatorname{Tr}[Z_{n}(S-S_{[n]}\otimes
	I_{B_{n}})]\\
	&  =\operatorname{Tr}[(Z_{n}-\mu_{1}\Gamma^{\mathcal{N}^{(n)}}%
	)S]-\operatorname{Tr}[\operatorname{Tr}_{B_{n}}[Z_{n}]S_{[n]}],
\end{align}
so that%
\begin{equation}
	\Phi_{2}^{\dag}(Y_{2})=\text{diag}(Z_{n}-\mu_{1}\Gamma^{\mathcal{N}^{(n)}%
	},-\operatorname{Tr}_{B_{n}}[Z_{n}],0,\ldots,0,0).
\end{equation}
So then we find that%
\begin{multline}
	\Phi_{1}^{\dag}(Y_{1})+\Phi_{2}^{\dag}(Y_{2})=\text{diag}(Z_{n}-\mu_{1}%
	\Gamma^{\mathcal{N}^{(n)}},Z_{n-1}\otimes I_{A_{n}}-\operatorname{Tr}_{B_{n}%
	}[Z_{n}], Z_{n-2}\otimes I_{A_{n-1}}-\operatorname{Tr}_{B_{n-1}}[Z_{n-1}],\ldots,\\
	Z_{1}\otimes I_{A_{2}}-\operatorname{Tr}_{B_{2}}[Z_{2}],\mu I_{A_{1}%
	}-\operatorname{Tr}_{B_{1}}[Z_{1}])
\end{multline}
and so $\Phi_{1}^{\dag}(Y_{1})+\Phi_{2}^{\dag}(Y_{2})\geq A$ is equivalent to
the following conditions:%
\begin{align}
	Z_{n}-\mu_{1}\Gamma^{\mathcal{N}^{(n)}}  &  \geq-\Gamma^{\mathcal{M}^{(n)}},\\
	Z_{n-1}\otimes I_{A_{n}}  &  \geq\operatorname{Tr}_{B_{n}}[Z_{n}],\\
	Z_{n-2}\otimes I_{A_{n-1}}  &  \geq\operatorname{Tr}_{B_{n-1}}[Z_{n-1}],\\
	&  \vdots\notag\\
	Z_{1}\otimes I_{A_{2}}  &  \geq\operatorname{Tr}_{B_{2}}[Z_{2}],\\
	\mu I_{A_{1}}  &  \geq\operatorname{Tr}_{B_{1}}[Z_{1}].
\end{align}
Also, observe that%
\begin{equation}
	\operatorname{Tr}[B_{1}Y_{1}]+\operatorname{Tr}[B_{2}Y_{2}]=\mu-\mu_{1}\left(
	1-\varepsilon\right)
\end{equation}
Thus, we conclude after plugging into \eqref{eq:other-opt-dual}\ that the dual
is given by%
\begin{equation}
	\sup_{\substack{\mu_{1},Z_{n}\geq0,\\\mu\in\mathbb{R},Z_{1},\ldots,Z_{n-1}%
			\in\text{Herm}}}\mu_{1}\left(  1-\varepsilon\right)  -\mu
\end{equation}
subject to%
\begin{align}
	Z_{n}  &  \geq\mu_{1}\Gamma^{\mathcal{N}^{(n)}}-\Gamma^{\mathcal{M}^{(n)}},\\
	Z_{n-1}\otimes I_{A_{n}}  &  \geq\operatorname{Tr}_{B_{n}}[Z_{n}],\\
	Z_{n-2}\otimes I_{A_{n-1}}  &  \geq\operatorname{Tr}_{B_{n-1}}[Z_{n-1}],\\
	&  \vdots\notag\\
	Z_{1}\otimes I_{A_{2}}  &  \geq\operatorname{Tr}_{B_{2}}[Z_{2}],\\
	\mu I_{A_{1}}  &  \geq\operatorname{Tr}_{B_{1}}[Z_{1}].
\end{align}

\subsection{Smooth strategy max-relative entropy}

First, we repeat the primal form of the smooth strategy max-relative entropy, given in~\eqref{eq:strategy-max-entropy-primal} in the main text:
\begin{equation}
	\inf_{\substack{\lambda,Y_{n},N\geq0,\\N_{[n-1]},\ldots,N_{[1]}\geq
			0,\\Y_{1},\ldots,Y_{n-1}\in\text{Herm}}}\lambda
\end{equation}
subject to%
\begin{align}
	N  &  \leq\lambda\Gamma^{\mathcal{M}^{(n)}},\\
	Y_{n}  &  \geq\Gamma^{\mathcal{N}^{(n)}}-N,\\
	Y_{n-1}\otimes I_{A_{n}}  &  \geq\operatorname{Tr}_{B_{n}}[Y_{n}],\\
	Y_{n-2}\otimes I_{A_{n-1}}  &  \geq\operatorname{Tr}_{B_{n-1}}[Y_{n-1}],\\
	&  \vdots\notag\\
	Y_{1}\otimes I_{A_{2}}  &  \geq\operatorname{Tr}_{B_{2}}[Y_{2}],\\
	\varepsilon I_{A_{1}}  &  \geq\operatorname{Tr}_{B_{1}}[Y_{1}],\\
	\operatorname{Tr}_{B_{n}}[N]  &  =N_{[n-1]}\otimes I_{A_{n}},\\
	\operatorname{Tr}_{B_{n-1}}[N_{[n-1]}]  &  =N_{[n-2]}\otimes I_{A_{n-1}},\\
	&  \vdots\notag\\
	\operatorname{Tr}_{B_{2}}[N_{[2]}]  &  =N_{[1]}\otimes I_{A_{2}},\\
	\operatorname{Tr}_{B_{1}}[N_{[1]}]  &  =I_{A_{1}}.
\end{align}
As a consequence of $Y_{n}\geq0$ and the constraints above, it follows that
$Y_{n-1},\ldots,Y_{1}\geq0$. So the above SDP\ can be cast in the form of
\eqref{eq:other-opt-primal}, with%
\begin{align}
	X  &  =\left(  \lambda,Y_{n},Y_{n-1},\ldots,Y_{2},Y_{1},N,N_{[n-1]}%
	,\ldots,N_{[2]},N_{[1]}\right)  ,\\
	A  &  =(-1,0,0,\ldots,0,0,0,\ldots,0),\\
	\Phi_{1}(X)  &  =\text{diag}(\operatorname{Tr}_{B_{n}}[N]-N_{[n-1]}\otimes
	I_{A_{n}}, \operatorname{Tr}_{B_{n-1}}[N_{[n-1]}]-N_{[n-2]}\otimes I_{A_{n-1}%
	},\ldots,\nonumber\\
	&  \qquad\operatorname{Tr}_{B_{2}}[N_{[2]}]-N_{[1]}\otimes I_{A_{2}%
	},\operatorname{Tr}_{B_{1}}[N_{[1]}]),\\
	B_{1}  &  =\text{diag}(0,0,\ldots,0,I_{A_{1}}),\\
	\Phi_{2}(X)  &  =\text{diag}(N-\lambda\Gamma^{\mathcal{M}^{(n)}}%
	,-Y_{n}-N,\operatorname{Tr}_{B_{n}}[Y_{n}]-Y_{n-1}\otimes I_{A_{n}%
	},\nonumber\\
	&  \qquad\operatorname{Tr}_{B_{n-1}}[Y_{n-1}]-Y_{n-2}\otimes I_{A_{n-1}%
	},\ldots, \operatorname{Tr}_{B_{2}}[Y_{2}]-Y_{1}\otimes I_{A_{2}}%
	,\operatorname{Tr}_{B_{1}}[Y_{1}]),\\
	B_{2}  &  =\text{diag}(0,-\Gamma^{\mathcal{N}^{(n)}},0,\ldots,0,\varepsilon
	I_{A_{1}}).
\end{align}
We should now figure out the adjoints of $\Phi_{1}$ and $\Phi_{2}$. Consider
that%
\begin{equation}
	\operatorname{Tr}[Y_{i}\Phi_{i}(X)]=\operatorname{Tr}[\Phi_{i}^{\dag}%
	(Y_{i})X]\text{ for }i\in\left\{  1,2\right\}  .
\end{equation}
Set%
\begin{equation}
	Y_{1}=\text{diag}(Z_{n},Z_{n-1},\ldots,Z_{2},Z_{1}).
\end{equation}
Then consider that%
\begin{align}
	&  \operatorname{Tr}[Y_{1}\Phi_{1}(X)]\nonumber\\
	&  =\operatorname{Tr}[Z_{n}(\operatorname{Tr}_{B_{n}}[N]-N_{[n-1]}\otimes
	I_{A_{n}})]\nonumber\\
	&  \qquad+\operatorname{Tr}[Z_{n-1}(\operatorname{Tr}_{B_{n-1}}[N_{[n-1]}%
	]-N_{[n-2]}\otimes I_{A_{n-1}})]+\cdots\nonumber\\
	&  \qquad+\operatorname{Tr}[Z_{2}(\operatorname{Tr}_{B_{2}}[N_{[2]}%
	]-N_{[1]}\otimes I_{A_{2}})]+\operatorname{Tr}[Z_{1}\operatorname{Tr}_{B_{1}%
	}[N_{[1]}]]\\
	&  =\operatorname{Tr}[(Z_{n}\otimes I_{B_{n}})N]-\operatorname{Tr}%
	[\operatorname{Tr}_{A_{n}}[Z_{n}]N_{[n-1]}]\nonumber\\
	&  \qquad+\operatorname{Tr}[(Z_{n-1}\otimes I_{B_{n-1}})N_{[n-1]}%
	]-\operatorname{Tr}[\operatorname{Tr}_{A_{n-1}}[Z_{n-1}]N_{[n-2]}%
	]+\cdots\nonumber\\
	&  \qquad+\operatorname{Tr}[(Z_{2}\otimes I_{B_{2}})N_{[2]}]-\operatorname{Tr}%
	[\operatorname{Tr}_{A_{2}}[Z_{2}]N_{[1]}]+\operatorname{Tr}[(Z_{1}\otimes
	I_{B_{1}})N_{[1]}]\\
	&  =\operatorname{Tr}[(Z_{n}\otimes I_{B_{n}})N]+\operatorname{Tr}%
	[(Z_{n-1}\otimes I_{B_{n-1}}-\operatorname{Tr}_{A_{n}}[Z_{n}])N_{[n-1]}%
	]\nonumber\\
	&  \qquad+\operatorname{Tr}[(Z_{n-2}\otimes I_{B_{n-2}}-\operatorname{Tr}%
	_{A_{n-1}}[Z_{n-1}])N_{[n-2]}]+\cdots\nonumber\\
	&  \qquad+\operatorname{Tr}[(Z_{2}\otimes I_{B_{2}}-\operatorname{Tr}_{A_{3}%
	}[Z_{3}])N_{[2]}] + \operatorname{Tr}[(Z_{1}\otimes I_{B_{1}}-\operatorname{Tr}_{A_{2}%
}[Z_{2}])N_{[1]}].
\end{align}
So then the adjoint of $\Phi_{1}$ is given by%
\begin{multline}
	\Phi_{1}^{\dag}(Y_{1})=\text{diag}(0,0,0,\ldots,0,0,Z_{n}\otimes I_{B_{n}%
	},Z_{n-1}\otimes I_{B_{n-1}}-\operatorname{Tr}_{A_{n}}[Z_{n}],\ldots,\\
	Z_{2}\otimes I_{B_{2}}-\operatorname{Tr}_{A_{3}}[Z_{3}],Z_{1}\otimes I_{B_{1}%
	}-\operatorname{Tr}_{A_{2}}[Z_{2}]).
\end{multline}
Set%
\begin{equation}
	Y_{2}=\text{diag}(W_{n+2},W_{n+1},W_{n},W_{n-1},\ldots,W_{2},W_{1}).
\end{equation}
Then we find that%
\begin{align}
	&  \operatorname{Tr}[Y_{2}\Phi_{2}(X)]\nonumber\\
	&  =\operatorname{Tr}[W_{n+2}(N-\lambda\Gamma^{\mathcal{M}^{(n)}%
	})]-\operatorname{Tr}[W_{n+1}(Y_{n}+N)]\nonumber\\
	&  \qquad+\operatorname{Tr}[W_{n}(\operatorname{Tr}_{B_{n}}[Y_{n}%
	]-Y_{n-1}\otimes I_{A_{n}})]\nonumber\\
	&  \qquad+\operatorname{Tr}[W_{n-1}(\operatorname{Tr}_{B_{n-1}}[Y_{n-1}%
	]-Y_{n-2}\otimes I_{A_{n-1}})]+\cdots\nonumber\\
	&  \qquad+\operatorname{Tr}[W_{2}(\operatorname{Tr}_{B_{2}}[Y_{2}%
	]-Y_{1}\otimes I_{A_{2}})]+\operatorname{Tr}[W_{1}\operatorname{Tr}_{B_{1}%
	}[Y_{1}]]\\
	&  =\operatorname{Tr}[W_{n+2}N]-\lambda\operatorname{Tr}[W_{n+2}%
	\Gamma^{\mathcal{M}^{(n)}}]-\operatorname{Tr}[W_{n+1}(Y_{n}+N)]\nonumber\\
	&  \qquad+\operatorname{Tr}[(W_{n}\otimes I_{B_{n}})Y_{n}]-\operatorname{Tr}%
	[\operatorname{Tr}_{A_{n}}[W_{n}]Y_{n-1}]\nonumber\\
	&  \qquad+\operatorname{Tr}[(W_{n-1}\otimes I_{B_{n-1}})Y_{n-1}%
	]-\operatorname{Tr}[\operatorname{Tr}_{A_{n-1}}[W_{n-1}]Y_{n-2}]+\cdots
	\nonumber\\
	&  \qquad+\operatorname{Tr}[(W_{2}\otimes I_{B_{2}})Y_{2}]-\operatorname{Tr}%
	[\operatorname{Tr}_{A_{2}}[W_{2}]Y_{1}]+\operatorname{Tr}[(W_{1}\otimes
	I_{B_{1}})Y_{1}]\\
	&  =\operatorname{Tr}[(W_{n+2}-W_{n+1})N]-\lambda\operatorname{Tr}%
	[W_{n+2}\Gamma^{\mathcal{M}^{(n)}}]+\operatorname{Tr}[(W_{n}\otimes I_{B_{n}%
	}-W_{n+1})Y_{n}]\nonumber\\
	&  \qquad+\operatorname{Tr}[(W_{n-1}\otimes I_{B_{n-1}}-\operatorname{Tr}%
	_{A_{n}}[W_{n}])Y_{n-1}]\nonumber\\
	&  \qquad+\operatorname{Tr}[(W_{n-2}\otimes I_{B_{n-2}}-\operatorname{Tr}%
	_{A_{n-1}}[W_{n-1}])Y_{n-2}]+\cdots\nonumber\\
	&  \qquad+\operatorname{Tr}[(W_{2}\otimes I_{B_{2}}-\operatorname{Tr}_{A_{3}%
	}[W_{3}])Y_{2}]+\operatorname{Tr}[(W_{1}\otimes I_{B_{1}}-\operatorname{Tr}%
	_{A_{2}}[W_{2}])Y_{1}].
\end{align}

So then the adjoint of $\Phi_{2}$ is given by%
\begin{multline}
	\Phi_{2}^{\dag}(Y_{2})=\text{diag}(-\operatorname{Tr}[W_{n+2}\Gamma
	^{\mathcal{M}^{(n)}}],W_{n}\otimes I_{B_{n}}-W_{n+1}, W_{n-1}\otimes I_{B_{n-1}}-\operatorname{Tr}_{A_{n}}[W_{n}],\ldots
	,W_{2}\otimes I_{B_{2}}-\operatorname{Tr}_{A_{3}}[W_{3}],\\
	W_{1}\otimes I_{B_{1}}-\operatorname{Tr}_{A_{2}}[W_{2}],W_{n+2}-W_{n+1}%
	,0,\ldots,0,0).
\end{multline}
Adding $\Phi_{1}^{\dag}(Y_{1})$ and $\Phi^{\dag}(Y_{2})$ gives%
\begin{multline}
	\Phi_{1}^{\dag}(Y_{1})+\Phi^{\dag}(Y_{2})=\text{diag}(-\operatorname{Tr}%
	[W_{n+2}\Gamma^{\mathcal{M}^{(n)}}],W_{n}\otimes I_{B_{n}}-W_{n+1},\\
	W_{n-1}\otimes I_{B_{n-1}}-\operatorname{Tr}_{A_{n}}[W_{n}],\ldots
	,W_{2}\otimes I_{B_{2}}-\operatorname{Tr}_{A_{3}}[W_{3}],\\
	W_{1}\otimes I_{B_{1}}-\operatorname{Tr}_{A_{2}}[W_{2}],W_{n+2}-W_{n+1}%
	+Z_{n}\otimes I_{B_{n}},\\
	Z_{n-1}\otimes I_{B_{n-1}}-\operatorname{Tr}_{A_{n}}[Z_{n}],\ldots
	,Z_{2}\otimes I_{B_{2}}-\operatorname{Tr}_{A_{3}}[Z_{3}], Z_{1}\otimes I_{B_{1}}-\operatorname{Tr}_{A_{2}}[Z_{2}]).
\end{multline}
Then the inequality $\Phi_{1}^{\dag}(Y_{1})+\Phi_{2}^{\dag}(Y_{2})\geq A$ is
equivalent to the following set of inequalities:%
\begin{align}
	-\operatorname{Tr}[W_{n+2}\Gamma^{\mathcal{M}^{(n)}}]  &  \geq-1,\\
	W_{n}\otimes I_{B_{n}}-W_{n+1}  &  \geq0,\\
	W_{n-1}\otimes I_{B_{n-1}}-\operatorname{Tr}_{A_{n}}[W_{n}]  &  \geq0,\\
	&  \vdots\nonumber\\
	W_{2}\otimes I_{B_{2}}-\operatorname{Tr}_{A_{3}}[W_{3}]  &  \geq0,\\
	W_{1}\otimes I_{B_{1}}-\operatorname{Tr}_{A_{2}}[W_{2}]  &  \geq0,\\
	W_{n+2}-W_{n+1}+Z_{n}\otimes I_{B_{n}}  &  \geq0,\\
	Z_{n-1}\otimes I_{B_{n-1}}-\operatorname{Tr}_{A_{n}}[Z_{n}]  &  \geq0,\\
	&  \vdots\nonumber\\
	Z_{2}\otimes I_{B_{2}}-\operatorname{Tr}_{A_{3}}[Z_{3}]  &  \geq0,\\
	Z_{1}\otimes I_{B_{1}}-\operatorname{Tr}_{A_{2}}[Z_{2}]  &  \geq0,
\end{align}
which can be rewritten as%
\begin{align}
	\operatorname{Tr}[W_{n+2}\Gamma^{\mathcal{M}^{(n)}}]  &  \leq1,\\
	W_{n}\otimes I_{B_{n}}  &  \geq W_{n+1},\\
	W_{n-1}\otimes I_{B_{n-1}}  &  \geq\operatorname{Tr}_{A_{n}}[W_{n}],\\
	&  \vdots\nonumber\\
	W_{2}\otimes I_{B_{2}}  &  \geq\operatorname{Tr}_{A_{3}}[W_{3}],\\
	W_{1}\otimes I_{B_{1}}  &  \geq\operatorname{Tr}_{A_{2}}[W_{2}],\\
	W_{n+2}+Z_{n}\otimes I_{B_{n}}  &  \geq W_{n+1},\\
	Z_{n-1}\otimes I_{B_{n-1}}  &  \geq\operatorname{Tr}_{A_{n}}[Z_{n}],\\
	&  \vdots\nonumber\\
	Z_{2}\otimes I_{B_{2}}  &  \geq\operatorname{Tr}_{A_{3}}[Z_{3}],\\
	Z_{1}\otimes I_{B_{1}}  &  \geq\operatorname{Tr}_{A_{2}}[Z_{2}].
\end{align}
The dual objective function is given by%
\begin{equation}
	-\operatorname{Tr}[B_{1}Y_{1}]-\operatorname{Tr}[B_{2}Y_{2}%
	]=-\operatorname{Tr}[Z_{1}]+\operatorname{Tr}[\Gamma^{\mathcal{N}^{(n)}%
	}W_{n+1}]-\varepsilon\operatorname{Tr}[W_{1}].
\end{equation}
So then the dual can be written as%
\begin{equation}
	\sup_{\substack{Z_{n},\ldots,Z_{1}\in\text{Herm},\\W_{n+2},\ldots,W_{1}\geq
			0}}-\operatorname{Tr}[Z_{1}]+\operatorname{Tr}[\Gamma^{\mathcal{N}^{(n)}%
	}W_{n+1}]-\varepsilon\operatorname{Tr}[W_{1}]
\end{equation}
subject to%
\begin{align}
	\operatorname{Tr}[W_{n+2}\Gamma^{\mathcal{M}^{(n)}}]  &  \leq1,\\
	W_{n}\otimes I_{B_{n}}  &  \geq W_{n+1},\\
	W_{n-1}\otimes I_{B_{n-1}}  &  \geq\operatorname{Tr}_{A_{n}}[W_{n}],\\
	&  \vdots\nonumber\\
	W_{2}\otimes I_{B_{2}}  &  \geq\operatorname{Tr}_{A_{3}}[W_{3}],\\
	W_{1}\otimes I_{B_{1}}  &  \geq\operatorname{Tr}_{A_{2}}[W_{2}],\\
	W_{n+2}+Z_{n}\otimes I_{B_{n}}  &  \geq W_{n+1},\\
	Z_{n-1}\otimes I_{B_{n-1}}  &  \geq\operatorname{Tr}_{A_{n}}[Z_{n}],\\
	&  \vdots\nonumber\\
	Z_{2}\otimes I_{B_{2}}  &  \geq\operatorname{Tr}_{A_{3}}[Z_{3}],\\
	Z_{1}\otimes I_{B_{1}}  &  \geq\operatorname{Tr}_{A_{2}}[Z_{2}].
\end{align}
Since all of the $Z$ variables are Hermitian, we can make the substitution
$Z_{i}\rightarrow-Z_{i}$ without affecting the optimal value. The final form
is then%
\begin{equation}
	\sup_{\substack{Z_{n},\ldots,Z_{1}\in\text{Herm},\\W_{n+2},\ldots,W_{1}\geq
			0}}\operatorname{Tr}[Z_{1}]+\operatorname{Tr}[\Gamma^{\mathcal{N}^{(n)}%
	}W_{n+1}]-\varepsilon\operatorname{Tr}[W_{1}]
\end{equation}
subject to%
\begin{align}
	\operatorname{Tr}[W_{n+2}\Gamma^{\mathcal{M}^{(n)}}]  &  \leq1,\\
	W_{n}\otimes I_{B_{n}}  &  \geq W_{n+1},\\
	W_{n-1}\otimes I_{B_{n-1}}  &  \geq\operatorname{Tr}_{A_{n}}[W_{n}],\\
	&  \vdots\nonumber\\
	W_{2}\otimes I_{B_{2}}  &  \geq\operatorname{Tr}_{A_{3}}[W_{3}],\\
	W_{1}\otimes I_{B_{1}}  &  \geq\operatorname{Tr}_{A_{2}}[W_{2}],\\
	W_{n+2}  &  \geq W_{n+1}+Z_{n}\otimes I_{B_{n}},\\
	\operatorname{Tr}_{A_{n}}[Z_{n}]  &  \geq Z_{n-1}\otimes I_{B_{n-1}},\\
	&  \vdots\nonumber\\
	\operatorname{Tr}_{A_{3}}[Z_{3}]  &  \geq Z_{2}\otimes I_{B_{2}},\\
	\operatorname{Tr}_{A_{2}}[Z_{2}]  &  \geq Z_{1}\otimes I_{B_{1}}.
\end{align}

\end{document}